\numberwithin{equation}{section}
\newtheorem{Theorem}{Theorem}[section]
\newtheorem*{Theorem*}{Theorem}
\newtheorem{Corollary}[Theorem]{Corollary}
\newtheorem{Proposition}[Theorem]{Proposition}
 { \theoremstyle{definition}
\newtheorem{Definition}[Theorem]{Definition}

}
\begin{document}
\allowdisplaybreaks

\newcommand{\arXivNumber}{2210.14180}

\renewcommand{\PaperNumber}{025}

\FirstPageHeading

\ShortArticleName{The $B_{2}$ Harmonic Oscillator with Reflections and Superintegrability}

\ArticleName{The $\boldsymbol{B_{2}}$ Harmonic Oscillator with Reflections\\ and Superintegrability}

\Author{Charles F.~DUNKL}

\AuthorNameForHeading{C.F.~Dunkl}

\Address{Department of Mathematics, University of Virginia,\\ PO Box 400137, Charlottesville VA 22904-4137, USA}
\Email{\href{mailto:cfd5z@virginia.edu}{cfd5z@virginia.edu}}
\URLaddress{\url{https://uva.theopenscholar.com/charles-dunkl}}

\ArticleDates{Received October 27, 2022, in final form April 17, 2023; Published online April 25, 2023}

\Abstract{The two-dimensional quantum harmonic oscillator is modified with reflection terms associated with the action of the Coxeter group $B_{2}$, which is the symmetry group of the square. The angular momentum operator is also modified with reflections. The~wave\-functions are known to be built up from Jacobi and Laguerre polynomials. This paper introduces a fourth-order differential-difference operator commuting with the Hamiltonian but not with the angular momentum operator; a specific instance of superintegrability. The action of the operator on the usual orthogonal basis of wavefunctions is explicitly described. The wavefunctions are classified according to the representations of the group: four of degree one and one of degree two. The identity representation encompasses the wavefunctions invariant under the group. The paper begins with a short discussion of the modified Hamiltonians associated to finite reflection groups, and related raising and lowering operators. In particular, the Hamiltonian for the symmetric groups describes the Calogero--Sutherland model of identical particles on the line with harmonic confinement.}

\Keywords{Dunkl harmonic oscillator; dihedral symmetry; superintegrability; Laguerre polynomials; Jacobi polynomials}

\Classification{81R12; 37J35; 33C45; 81Q05}

\section{Introduction}

The two-dimensional quantum harmonic oscillator is modified with reflection
terms associated with the action of the Coxeter group $B_{2}$, the symmetry
group of the square. The wavefunctions are known to be built up from Jacobi
and Laguerre polynomials. This paper introduces a fourth-order
differential-difference operator commuting with the Hamiltonian but not with
the angular momentum operator; a specific instance of superintegrability. The
action of the operator on an orthogonal basis of wavefunctions is explicitly
described. The wavefunctions are not in general invariant under the group,
rather are classified by the representations of the group: four of degree one
and one of degree two. The group-invariant wavefunctions of the $B_{2}$
oscillator and its superintegrability have been studied by Tremblay et al.~\cite{TremblayETal2009,TremblayETal2010}, Quesne~\cite{Quesne2010}.

First the general background on finite reflection groups and root systems,
Dunkl operators, and the associated Hamiltonian is described. In particular, the
Calogero--Sutherland model of~$N$ identical particles on a line with $r^{-2}$
interaction and harmonic confinement comes from the symmetric group (Lassalle~\cite{Lassalle1991}, Baker and Forrester~\cite{BakerForrester1997}). In the
general situation, there are raising and lowering operators which can be used
to construct operators commuting with the Hamiltonian and the group action.
After this the development turns to dihedral groups (type~$I_{2}(k)$) and the use of a complex coordinate system, which simplifies the
description of rotations. Some general formulas are specialized to this setting.

The description of the wavefunctions of $B_{2}$ and of the action of specific
operators is in Sections~\ref{WFbasis} and~\ref{ExpH0}. The important
operators are the Hamiltonian $\mathcal{H}$, the angular momentum
$\mathcal{J}$ and a new operator $\mathcal{K}$ which commutes with
$\mathcal{H}$ and the group action but not with $\mathcal{J}^{2}$. This
property constitutes superintegrability. There are a number of different
classes of wavefunctions, requiring frequent case-by-case analysis. The
explicit action of $\mathcal{K}$ on the orthogonal basis of wavefunctions is
found in Section~\ref{ExpK}.

In the appendix, there are details on some proofs, and a sketch of a symbolic
computation method of proving relations involving polynomials and Dunkl operators.

\section{Reflection groups and a harmonic oscillator}

In $\mathbb{R}^{N}$ the inner product is $\langle x,y\rangle
:=\sum_{i=1}^{N}x_{i}y_{i}$ and $\Vert x\Vert^{2}=\langle
x,x\rangle $. If $v\neq0$, then the reflection~$\sigma_{v}$ along $v$ is
defined by%
\[
x\sigma_{v}:=x-2\frac{\langle x,v\rangle}{\Vert v\Vert^{2}}v.
\]
This is an isometry $\Vert x\sigma_{v}\Vert^{2}=\Vert
x\Vert^{2}$ and an involution $\sigma_{v}^{2}=I$. The set of fixed
points ($x\sigma_{v}=x$) is the hyperplane $\{ x\colon \langle
x,v\rangle =0\} $. A finite root system is a subset $R$ of
nonzero elements of $\mathbb{R}^{N}$ satisfying $u,v\in R$ implies
$u\sigma_{v}\in R$. We restrict consideration to reduced root systems, that is
if $u,cu\in R$, then $c=\pm1$. Define $W(R) $ to be the group
generated by $\{ \sigma_{v}\colon v\in R\} $; this is a finite subgroup
of the orthogonal group $O_{N}(\mathbb{R})$. There is a
decomposition of $R$ into $R_{+}$ (the positive roots) and $R_{-}$; this
relies on choice of a vector $u$ such that $\langle u,v\rangle
\neq0$ for all $v\in R$ then set $R_{+}=\{ v\in R\colon \langle
u,v\rangle >0\} $. Since $\sigma_{v}=\sigma_{-v}$, the set $R_{+}$
can be used to index the reflections in $W(R) $. The set of
reflections $\sigma_{v}$ decomposes into conjugacy classes $(W$ orbits)
$\sigma_{u}\sim\sigma_{v}$ if $u=vw$ for some $w\in W(R)$. A
\textit{multiplicity function} $\kappa_{v}$ is a function on $R$ which is
constant on each conjugacy class, usually here $\kappa_{v}\geq1$. Set
$\gamma_{\kappa}:=\sum_{v\in R_{+}}\kappa_{v}$. Define the Dunkl operator
($1\leq i\leq N$)%
\[
\mathcal{D}_{i}f(x) :=\frac{\partial}{\partial x_{i}}f(x)
+\sum_{v\in R_{+}}\kappa_{v}\frac{f(x)-f(x\sigma_{v})}{\langle x,v\rangle}v_{i}.
\]
Then $\mathcal{D}_{i}\mathcal{D}_{j}=\mathcal{D}_{j}\mathcal{D}_{i}$ for all
$i$, $j$ (Dunkl \cite{Dunkl1989}, also see Dunkl and Xu \cite[Theorem~6.4.8]%
{DunklXu2014}). Let
\[
\nabla=\bigg(\frac{\partial}{\partial x_{1}},\ldots,\frac{\partial}{\partial x_{N}}\bigg), \qquad \Delta=\sum_{i=1}^{N}\bigg(\frac{\partial}{\partial x_{i}}\bigg)^{2}\qquad \text{and }\qquad \nabla_{\kappa}=(\mathcal{D}_{1},\ldots,\mathcal{D}_{N}).
\]

The Dunkl Laplacian is $\Delta_{\kappa}:=\sum_{i=1}^{N}\mathcal{D}_{i}^{2}$
and
\[
\Delta_{\kappa}f(x) =\Delta f(x) +\sum_{v\in R_{+}}\kappa_{v}\bigg(2\frac{\langle \nabla f(x)
,v\rangle}{\langle x,v\rangle}-\Vert v\Vert^{2}\frac{f(x) -f(x\sigma_{v})}{\langle
x,v\rangle^{2}}\bigg).
\]
This leads to the modified Schr\"{o}dinger equation (with parameter $\omega>0$)%
\[
\mathcal{H}\psi:=\big(\omega^{2}\Vert x\Vert^{2}-\Delta_{\kappa}\big) \psi=E\psi.
\]
The exponential ground state is $g(x) :=\exp\bigl(-\frac{\omega}{2}\Vert x\Vert^{2}\bigr)$, as can be seen from
the transformation%
\[
g^{-1}\big(\omega^{2}\Vert x\Vert^{2}-\Delta_{\kappa}\big)
(fg) =-\Delta_{\kappa}f+\omega(N+2\gamma_{\kappa
}+2\langle x,\nabla\rangle ) f,
\]
which implies $\big(\omega^{2}\Vert x\Vert^{2}-\Delta_{\kappa
}\big) g=\omega(N+2\gamma_{\kappa}) g$. An equivalent
expression is%
\begin{equation}
g^{-1}\mathcal{H}g=-\Delta_{\kappa}+\omega\sum_{i=1}^{N}(x_{i}%
\mathcal{D}_{i}+\mathcal{D}_{i}x_{i}). \label{gHgD}%
\end{equation}
Denote the set of polynomials on $\mathbb{R}^{N}$ by $\mathcal{P}$ and the set
of polynomials homogeneous of degree~$n$ by~$\mathcal{P}_{n}$ (that is,
$p(cx) =c^{n}p(x)$ for $c\in\mathbb{R}$). Let
$\mathcal{H}_{\kappa,n}=\{ p\in\mathcal{P}_{n}\colon \Delta_{\kappa
}p=0\} $ ($\kappa$-\textit{harmonic} polynomials). We find
eigenfunctions of $g^{-1}\mathcal{H}g$ of the form $p(x)
q\big(\omega\Vert x\Vert^{2}\big) $ with $p\in
\mathcal{H}_{\kappa,n}$ (thus $\Delta_{\kappa}p=0$ and $\langle
x,\nabla\rangle p=np$). This gives the differential equation (where
$t=\omega\Vert x\Vert^{2}$)
\[
t\frac{{\rm d}^{2}}{{\rm d}t^{2}}q+\bigg(n+\frac{N}{2}+\gamma_{\kappa}-t\bigg)
\frac{\rm d}{{\rm d}t}q-\frac{1}{4}\bigg(2n+N+2\gamma_{\kappa}-\frac{E}{\omega}\bigg) q=0
\]
and the solution is the Laguerre polynomial $q(t) =L_{m}^{(\alpha)}(t)$, $m=0,1,2,\ldots$,
$\alpha=\gamma_{\kappa}+n+\frac{N}{2}-1$, $E=\omega(N+2\gamma_{\kappa}+2n+4m)$. Note $E$ depends on $\deg(pq) =n+2m$. The Laguerre polynomial of degree $n$ and index $\alpha>-1$ satisfies
\begin{align*}
&L_{n}^{(\alpha)}(t) :=\frac{(\alpha+1)_{n}}{n!}\sum_{j=0}^{n}\frac{(-n)_{j}}{(\alpha+1)_{j}}\frac{t^{j}}{j!},
\\
&\int_{0}^{\infty}L_{m}^{(\alpha)}(t)L_{k}^{(\alpha)}(t) t^{\alpha}{\rm e}^{-t}\,\mathrm{d}t
 =\delta_{mk}\frac{\Gamma(\alpha+1+m)}{m!}=\delta_{mk}
\Gamma(\alpha+1) \frac{(\alpha+1)_{m}}{m!}.
\end{align*}
The Pochhammer symbol is $(a)_{n}=\prod_{i=1}^{n}(a+i-1)$ (or $(a)_{0}=1$ and $(a)_{n+1}=(a+n)(a)_{n}$).

There is an orthogonality structure which uses the $W(R)$-invariant weight function
\[
h_{\kappa}(x) :=\prod_{v\in R_{+}}\vert \langle x,v\rangle \vert^{\kappa_{v}}
\]
positively homogeneous of degree $\gamma_{\kappa}$. The orthogonality
$\mathcal{H}_{\kappa,n}\bot\mathcal{H}_{\kappa,m}$ for $n\neq m$ holds with
respect to the measure $h_{\kappa}(x)^{2}{\rm d}\mu(x)$ on the sphere
$S^{N-1}:=\{x\colon \Vert x\Vert =1\}$,
where $\mu$ is the rotation-invariant surface measure. There is a key result
on adjoints: suppose $p,q$ are sufficiently smooth and have exponential decay
then (with $1\leq i\leq N$)
\begin{equation}
\int_{\mathbb{R}^{N}}(\mathcal{D}_{i}p) qh_{\kappa}^{2}\,\mathrm{d}m
=-\int_{\mathbb{R}^{N}}p(\mathcal{D}_{i}q)h_{\kappa}^{2}\,\mathrm{d}m,
\label{D*D}
\end{equation}
where $\mathrm{d}m$ is Lebesgue measure on $\mathbb{R}^{N}$ (see~\cite[Theorem~7.7.10]{DunklXu2014}). Thus the adjoint of $\mathcal{D}_{i}$ is defined on a
dense subspace of $L^{2}\big(\mathbb{R}^{N},h_{\kappa}^{2}\mathrm{d}
m\big)$ and $\mathcal{D}_{i}^{\ast}=-\mathcal{D}_{i}$. This meaning of
adjoint will be used throughout. Furthermore, the conjugate of $\mathcal{H}$ is
\[
h_{\kappa}\bigl(-\Delta_{\kappa}+\omega^{2}\Vert x\Vert
^{2}\bigr) h_{\kappa}^{-1}=-\Delta+\omega^{2}\Vert x\Vert
^{2}+\sum_{v\in R_{+}}\frac{\kappa_{v}(\kappa_{v}-\sigma_{v})
\Vert v\Vert^{2}}{\langle x,v\rangle^{2}},
\]
(details of the derivation are in Appendix~\ref{hHh}) a Schr\"{o}dinger
equation with the potential%
\[
V(x)=\omega^{2}\Vert x\Vert^{2}+\sum_{v\in R_{+}}\frac{\kappa
_{v}(\kappa_{v}-\sigma_{v}) \Vert v\Vert^{2}}{\langle x,v\rangle^{2}},
\]
which includes reflections. The ground state is $h_{\kappa}g$. For the special
case where $R$ is the root system of type $A_{N-1}$ and $W(R)
=\mathcal{S}_{N}$ (the symmetric group), this potential occurs in the
Calogero--Sutherland model of $N$ identical particles on a line with $r^{-2}$
interaction potential and harmonic confinement. There is a closely related
model of $N$ identical particles on a circle with~$r^{-2}$ interaction, called
the trigonometric model. The wavefunctions are Jack polynomials in the
variables $x_{j}={\rm e}^{\mathrm{i}\theta_{j}}$, $1\leq j\leq N$. Lapointe and Vinet
\cite{LapointeVinet1996} defined raising and lowering operators and found
Rodrigues formulas for the Jack polynomials arising in this model. The Jack
polynomials can be used as bases for generalized Hermite (Lassalle
\cite{Lassalle1991}) and Laguerre polynomials, which occur as wavefunctions in
types $A$ and $B$ models on the line (Baker and Forrester
\cite{BakerForrester1997}, also see \cite[Section~11.6.3]{DunklXu2014}).

We need the basic commutation relations ($[A,B] :=AB-BA$) for
$a,b\in\mathbb{R}^{N}$:%
\begin{align}
\label{xDx}
&[\langle a,\nabla_{\kappa}\rangle,\langle b,x\rangle]
=\langle a,b\rangle +2\sum_{v\in R_{+}}\kappa_{v}
\frac{\langle a,v\rangle \langle b,v\rangle}{\Vert v\Vert^{2}}\sigma_{v},
\\
\label{Dbxsq}
&[\Delta_{\kappa},\langle b,x\rangle] =2\langle b,\nabla_{\kappa}\rangle,\qquad
\big[\Vert x\Vert^{2},\langle a,\nabla_{\kappa}\rangle\big] =-2\langle a,x\rangle.
\end{align}

\begin{Definition}
For $a,b\in\mathbb{R}^{N}$, the angular momentum operator is $J_{a,b}
:=\langle a,x\rangle \langle b,\nabla_{\kappa}\rangle
-\langle b,x\rangle \langle a,\nabla_{\kappa}\rangle$.
\end{Definition}

\begin{Proposition}
\label{Jprops}
$J_{a,b}=\langle b,\nabla_{\kappa}\rangle
\langle a,x\rangle -\langle a,\nabla_{\kappa}\rangle
\langle b,x\rangle $; $J_{a,b}^{\ast}=-J_{a,b}$ and $[\mathcal{H},J_{a,b}] =0$.
\end{Proposition}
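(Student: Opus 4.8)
The plan is to establish the three assertions in order, using the first to streamline the other two. For the alternative expression for $J_{a,b}$, I would subtract the proposed form from the defining form and show the difference is zero. Writing $J_{a,b}=\langle a,x\rangle\langle b,\nabla_\kappa\rangle-\langle b,x\rangle\langle a,\nabla_\kappa\rangle$ and subtracting $\langle b,\nabla_\kappa\rangle\langle a,x\rangle-\langle a,\nabla_\kappa\rangle\langle b,x\rangle$ leaves exactly $[\langle a,x\rangle,\langle b,\nabla_\kappa\rangle]-[\langle b,x\rangle,\langle a,\nabla_\kappa\rangle]$. By (\ref{xDx}), each of these two commutators equals $-S(a,b)$, where $S(a,b):=\langle a,b\rangle+2\sum_{v\in R_{+}}\kappa_v\Vert v\Vert^{-2}\langle a,v\rangle\langle b,v\rangle\sigma_v$ is symmetric in $a$ and $b$ — note in particular that the reflection terms depend on $a$ and $b$ only through the symmetric product $\langle a,v\rangle\langle b,v\rangle$. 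Hence the difference vanishes, proving the first identity.

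For the skew-adjointness I would use $\mathcal{D}_i^{\ast}=-\mathcal{D}_i$ from (\ref{D*D}), so that $\langle c,\nabla_\kappa\rangle^{\ast}=-\langle c,\nabla_\kappa\rangle$, together with the fact that multiplication by the real linear form $\langle c,x\rangle$ is self-adjoint on $L^{2}\big(\mathbb{R}^{N},h_{\kappa}^{2}\,\mathrm{d}m\big)$. Taking adjoints in the defining form and reversing the order of the two factors in each product gives $J_{a,b}^{\ast}=-\langle b,\nabla_\kappa\rangle\langle a,x\rangle+\langle a,\nabla_\kappa\rangle\langle b,x\rangle$, which is precisely the negative of the alternative expression just established; hence $J_{a,b}^{\ast}=-J_{a,b}$.

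For the commutation with $\mathcal{H}=\omega^{2}\Vert x\Vert^{2}-\Delta_{\kappa}$, I would show separately that $J_{a,b}$ commutes with $\Vert x\Vert^{2}$ and with $\Delta_{\kappa}$, the tactic being to pick, for each, the form of $J_{a,b}$ in which the relevant operator slides past the leading factor for free. For $\Vert x\Vert^{2}$ I would use the defining form: $\Vert x\Vert^{2}$ commutes with the multiplication operators $\langle a,x\rangle$ and $\langle b,x\rangle$, so $\big[\Vert x\Vert^{2},J_{a,b}\big]=\langle a,x\rangle\big[\Vert x\Vert^{2},\langle b,\nabla_\kappa\rangle\big]-\langle b,x\rangle\big[\Vert x\Vert^{2},\langle a,\nabla_\kappa\rangle\big]$, and (\ref{Dbxsq}) turns each inner commutator into $-2\langle\cdot,x\rangle$, leaving $-2\langle a,x\rangle\langle b,x\rangle+2\langle b,x\rangle\langle a,x\rangle=0$. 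For $\Delta_{\kappa}$ I would use the alternative form from the first part: since the Dunkl operators commute, $\Delta_{\kappa}$ commutes with $\langle a,\nabla_\kappa\rangle$ and $\langle b,\nabla_\kappa\rangle$, so $[\Delta_{\kappa},J_{a,b}]=\langle b,\nabla_\kappa\rangle[\Delta_{\kappa},\langle a,x\rangle]-\langle a,\nabla_\kappa\rangle[\Delta_{\kappa},\langle b,x\rangle]$, and (\ref{Dbxsq}) gives $2\langle b,\nabla_\kappa\rangle\langle a,\nabla_\kappa\rangle-2\langle a,\nabla_\kappa\rangle\langle b,\nabla_\kappa\rangle$, which again vanishes by commutativity of the Dunkl operators.

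The computations are all short once the bookkeeping is in place; the one genuinely load-bearing observation — the step I would flag as the crux — is the symmetry of the right-hand side of (\ref{xDx}) under $a\leftrightarrow b$. This is what makes the first identity hold even though the $\sigma_v$ are noncommutative group elements, and the two matched choices of form in the last part then make the commutators collapse with no residual reflection terms. If that symmetry failed, $J_{a,b}$ would be neither skew-adjoint nor conserved, so verifying it carefully is the heart of the argument.
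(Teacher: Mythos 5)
Your proposal is correct and follows essentially the same route as the paper: the first identity from the $a\leftrightarrow b$ symmetry of the right-hand side of \eqref{xDx}, skew-adjointness from \eqref{D*D} combined with the first identity, and the conservation law from \eqref{Dbxsq} together with commutativity of the Dunkl operators. The only organizational difference is in the last part, where the paper computes $[\mathcal{H},\langle b,\nabla_{\kappa}\rangle\langle a,x\rangle]$ once and invokes symmetry in $a$, $b$, while you split $\mathcal{H}$ into $\omega^{2}\Vert x\Vert^{2}$ and $-\Delta_{\kappa}$ and kill each commutator separately using the matched form of $J_{a,b}$ — a rearrangement of the same ingredients that also yields the subsequent Corollary directly.
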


\begin{proof}
From \eqref{xDx}, the commutator $[\langle a,x\rangle,\langle b,\nabla_{\kappa}\rangle]
=-[\langle b,x\rangle,\langle a,\nabla_{\kappa}\rangle]$. This
proves the first statement. By \eqref{D*D}, $J_{a,b}^{\ast}=-\langle
a,x\rangle \langle b,\nabla\kappa\rangle +\langle
b,x\rangle \langle a,\nabla_{\kappa}\rangle =-J_{a,b}$. Next
by~\eqref{Dbxsq},
\begin{align*}
[\mathcal{H},\langle b,\nabla_{\kappa}\rangle \langle a,x\rangle ]
& =\omega^{2}\big\{ \Vert x\Vert^{2}\langle b,\nabla_{\kappa}\rangle \langle a,x\rangle
-\langle b,\nabla_{\kappa}\rangle \Vert x\Vert^{2}\langle a,x\rangle \big\}
\\
&\phantom{=} -\{ \Delta_{\kappa}\langle b,\nabla_{\kappa}\rangle
\langle a,x\rangle -\langle b,\nabla_{\kappa}\rangle
\langle a,x\rangle \Delta_{\kappa}\}
\\
& =\omega^{2}\big[ \Vert x\Vert^{2},\langle b,\nabla_{\kappa}\rangle \big] \langle a,x\rangle
-\langle b,\nabla_{\kappa}\rangle [ \Delta_{\kappa},\langle a,x\rangle]
\\
& =-2\omega^{2}\langle b,x\rangle \langle a,x\rangle
-2\langle b,\Delta_{\kappa}\rangle \langle a,\nabla_{\kappa}\rangle,
\end{align*}
and this expression is symmetric in $a,b$ and thus $[\mathcal{H},J_{a,b}] =0$.
\end{proof}

\begin{Corollary}
$[ \Delta_{\kappa},J_{a,b}] =0$ and $\big[\Vert x\Vert^{2},J_{a,b}\big] =0$.
\end{Corollary}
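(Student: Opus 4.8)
The plan is to reduce both identities to the commutation relations in \eqref{Dbxsq} together with the fact that the Dunkl operators commute with one another, so that each assertion follows from a short application of the derivation rule $[A,BC]=[A,B]C+B[A,C]$. I would treat $[\Delta_{\kappa},J_{a,b}]$ first. Expanding $J_{a,b}=\langle a,x\rangle\langle b,\nabla_{\kappa}\rangle-\langle b,x\rangle\langle a,\nabla_{\kappa}\rangle$ term by term, I use $[\Delta_{\kappa},\langle a,x\rangle]=2\langle a,\nabla_{\kappa}\rangle$ from \eqref{Dbxsq} on the position factors, and I observe that $\Delta_{\kappa}=\sum_{i}\mathcal{D}_{i}^{2}$ commutes with each $\langle b,\nabla_{\kappa}\rangle$ precisely because the Dunkl operators commute. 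This collapses the first term to $2\langle a,\nabla_{\kappa}\rangle\langle b,\nabla_{\kappa}\rangle$ and the second to $2\langle b,\nabla_{\kappa}\rangle\langle a,\nabla_{\kappa}\rangle$; since these two compositions of commuting Dunkl operators are equal, the difference vanishes.

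For $[\Vert x\Vert^{2},J_{a,b}]$ I would proceed symmetrically, now using $\big[\Vert x\Vert^{2},\langle a,\nabla_{\kappa}\rangle\big]=-2\langle a,x\rangle$ from \eqref{Dbxsq} and the fact that $\Vert x\Vert^{2}$ commutes with the multiplication operators $\langle a,x\rangle$ and $\langle b,x\rangle$. Each of the two terms then reduces to $-2\langle a,x\rangle\langle b,x\rangle$, and the contributions cancel because multiplication operators commute with each other.

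There is, however, a cleaner route that I would actually prefer to record, bypassing both computations and leaning on the Proposition. Since $[\mathcal{H},J_{a,b}]=0$ is the statement $\omega^{2}\big[\Vert x\Vert^{2},J_{a,b}\big]=[\Delta_{\kappa},J_{a,b}]$, and since $J_{a,b}$ maps $\mathcal{P}_{n}$ into $\mathcal{P}_{n}$ (the factor $\langle a,x\rangle$ raising degree by one exactly compensates the Dunkl factor $\langle b,\nabla_{\kappa}\rangle$ lowering it by one), the left-hand operator sends $\mathcal{P}_{n}$ into $\mathcal{P}_{n+2}$ while the right-hand operator sends $\mathcal{P}_{n}$ into $\mathcal{P}_{n-2}$. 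As these images lie in disjoint homogeneous components and polynomials decompose as the direct sum of their homogeneous parts, both sides must vanish on each $\mathcal{P}_{n}$, hence as operators on $\mathcal{P}$.

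The main obstacle is organizational rather than conceptual: in the direct computation one must track the order of the non-commuting factors and check that the correction terms produced by \eqref{Dbxsq} emerge symmetric in $a$ and $b$, so that the antisymmetrization built into $J_{a,b}$ annihilates them. Notably, the reflection contributions carried by the commutator \eqref{xDx} never surface on their own, since only \eqref{Dbxsq}, into which they have already been absorbed, is required.
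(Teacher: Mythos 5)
Your proposal is correct, on both routes. The paper states this Corollary with no proof at all, as an immediate byproduct of Proposition~\ref{Jprops}: in that proof the commutator $[\mathcal{H},\langle b,\nabla_{\kappa}\rangle\langle a,x\rangle]$ is reduced to $-2\omega^{2}\langle b,x\rangle\langle a,x\rangle-2\langle b,\nabla_{\kappa}\rangle\langle a,\nabla_{\kappa}\rangle$, and each of the two pieces --- the one produced by $\Vert x\Vert^{2}$ and the one produced by $\Delta_{\kappa}$ --- is already separately symmetric in $a$ and $b$, so antisymmetrizing over $a,b$ kills them individually. Your first, computational route is exactly this argument (you expand the other ordering of $J_{a,b}$, which changes nothing, and your signs and cancellations check out). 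Your preferred second route is genuinely different: reading $[\mathcal{H},J_{a,b}]=0$ as $\omega^{2}\big[\Vert x\Vert^{2},J_{a,b}\big]=[\Delta_{\kappa},J_{a,b}]$ and noting that $J_{a,b}$ preserves $\mathcal{P}_{n}$ while the left side raises degree by $2$ and the right side lowers it by $2$, you force both sides to vanish on every $\mathcal{P}_{n}$. This uses only the \emph{statement} of the Proposition, not its proof, and requires no further computation; equivalently, one can let $\omega$ vary and separate the coefficients of $\omega^{2}$ and $\omega^{0}$. What the degree (or $\omega$-separation) argument buys is economy, and it explains why the Corollary can legitimately be left proofless; its only small caveat is that it establishes the identities as operators on $\mathcal{P}$, which suffices here since throughout the paper (cf.\ Appendix~\ref{Symb}) operators in this algebra are determined by their action on polynomials. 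Your closing observation is also accurate: the reflection terms of \eqref{xDx} never enter, because only \eqref{Dbxsq} and the mutual commutativity of the Dunkl operators are needed.
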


This family of angular momentum operators has been studied by Feigin and
Hakobyan \cite{FeiginHakobyan2015}, especially in connection with the
symmetric group and the Calogero--Moser model.

We introduce raising and lowering operators. These operators were used by
Feigin \cite{Feigin2012} in his study of generalized Calogero--Moser models,
which are constructed in terms of subdiagrams (certain subsets of roots) of
the Coxeter diagram of $W(R)$. Note $\{A,B\}:=AB+BA$.

\begin{Definition}
For $a\in\mathbb{R}^{N}$, $a\neq0$, let $A_{a}^{\pm}=\omega\langle
a,x\rangle \pm\langle a,\nabla_{\kappa}\rangle $ and
$H_{a}:=\frac{1}{2}\{ A_{a}^{+},A_{a}^{-}\}
=\omega^{2}\langle a,x\rangle^{2}-\langle a,\nabla_{\kappa}\rangle^{2}$.
\end{Definition}

\begin{Proposition}
\label{[hH'0}
$(A_{a}^{+})^{\ast}=A_{a}^{-}$; $g^{-1}A_{a}^{+}g=\langle a,\nabla_{\kappa}\rangle$ $($lowering$)$ and
$g^{-1}A_{a}^{-}g=2\omega\langle a,x\rangle -\langle a,\nabla_{\kappa}\rangle$ $($raising$)$; $H_{a}^{\ast}=H_{a}$ and $[\mathcal{H},H_{a}] =0$. Also $g^{-1}H_{a}g=\omega(\langle a,x\rangle \langle a,\nabla_{\kappa}\rangle+\langle a,\nabla_{\kappa}\rangle \langle a,x\rangle
) -\langle a,\nabla_{\kappa}\rangle^{2}$.
\end{Proposition}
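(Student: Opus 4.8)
The plan is to treat the adjoint and conjugation identities first --- these follow directly from the adjoint relation \eqref{D*D} and a product rule for the Dunkl operators --- and to reserve the real work for the commutation $[\mathcal{H},H_{a}]=0$. For $(A_{a}^{+})^{\ast}=A_{a}^{-}$ I would use that multiplication by the real-valued function $\langle a,x\rangle$ is self-adjoint on $L^{2}\bigl(\mathbb{R}^{N},h_{\kappa}^{2}\,\mathrm{d}m\bigr)$, while $\langle a,\nabla_{\kappa}\rangle=\sum_{i}a_{i}\mathcal{D}_{i}$ has adjoint $-\langle a,\nabla_{\kappa}\rangle$ by \eqref{D*D}; taking the adjoint of $A_{a}^{+}=\omega\langle a,x\rangle+\langle a,\nabla_{\kappa}\rangle$ then flips the sign of the derivative term and leaves the multiplication term, giving $A_{a}^{-}$. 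Self-adjointness $H_{a}^{\ast}=H_{a}$ is then immediate from $H_{a}=\tfrac{1}{2}\{A_{a}^{+},A_{a}^{-}\}$ together with $(A_{a}^{+})^{\ast}=A_{a}^{-}$ (equivalently, from $H_{a}=\omega^{2}\langle a,x\rangle^{2}-\langle a,\nabla_{\kappa}\rangle^{2}$).

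The conjugation formulas rest on one observation: since $g(x)=\exp\bigl(-\tfrac{\omega}{2}\Vert x\Vert^{2}\bigr)$ is $W(R)$-invariant, $g(x\sigma_{v})=g(x)$, so in $\mathcal{D}_{i}(gf)$ the factor $g$ pulls out of every difference quotient while the ordinary Leibniz rule handles $\partial_{i}$. This gives $\mathcal{D}_{i}(gf)=g\,\mathcal{D}_{i}f+(\partial_{i}g)f=g\,\mathcal{D}_{i}f-\omega x_{i}\,gf$, hence $\langle a,\nabla_{\kappa}\rangle(gf)=g\langle a,\nabla_{\kappa}\rangle f-\omega\langle a,x\rangle\,gf$. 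Substituting into $A_{a}^{\pm}(gf)$ and multiplying by $g^{-1}$, the two $\omega\langle a,x\rangle$ contributions cancel for $A_{a}^{+}$ and reinforce for $A_{a}^{-}$, yielding the claimed lowering and raising operators. The formula for $g^{-1}H_{a}g$ follows since conjugation preserves products: $g^{-1}H_{a}g=\tfrac{1}{2}\{\langle a,\nabla_{\kappa}\rangle,\,2\omega\langle a,x\rangle-\langle a,\nabla_{\kappa}\rangle\}$, and expanding this anticommutator produces the stated expression.

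The one substantive step is $[\mathcal{H},H_{a}]=0$, and the efficient route is to compute the ladder relations first rather than expand $H_{a}$ in full. Writing $\mathcal{H}=\omega^{2}\Vert x\Vert^{2}-\Delta_{\kappa}$ and applying \eqref{Dbxsq} --- together with the facts that $\Vert x\Vert^{2}$ commutes with $\langle a,x\rangle$ and that $\Delta_{\kappa}$ commutes with $\langle a,\nabla_{\kappa}\rangle$ (because the Dunkl operators commute) --- I expect to obtain $[\mathcal{H},A_{a}^{+}]=-2\omega A_{a}^{+}$ and $[\mathcal{H},A_{a}^{-}]=2\omega A_{a}^{-}$. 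The conclusion is then structural: by the derivation property of the commutator, $[\mathcal{H},A_{a}^{+}A_{a}^{-}]=-2\omega A_{a}^{+}A_{a}^{-}+2\omega A_{a}^{+}A_{a}^{-}=0$ and likewise $[\mathcal{H},A_{a}^{-}A_{a}^{+}]=0$, so $[\mathcal{H},H_{a}]=\tfrac{1}{2}[\mathcal{H},\{A_{a}^{+},A_{a}^{-}\}]=0$. The main obstacle is purely bookkeeping: tracking the four commutators from \eqref{Dbxsq} and verifying the $\pm2\omega$ eigen-relations, after which the anticommutator structure of $H_{a}$ forces the cancellation automatically. A direct expansion of $[\mathcal{H},H_{a}]$ using \eqref{Dbxsq} also works, with the two surviving mixed commutators $-\omega^{2}[\Vert x\Vert^{2},\langle a,\nabla_{\kappa}\rangle^{2}]$ and $-\omega^{2}[\Delta_{\kappa},\langle a,x\rangle^{2}]$ cancelling against each other.
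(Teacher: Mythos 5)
Your proposal is correct, and it differs from the paper's proof in two ways worth noting. For the adjoint statements you argue exactly as the paper does, directly from \eqref{D*D}. The conjugation formulas $g^{-1}A_{a}^{\pm}g$ and $g^{-1}H_{a}g$ are stated in the proposition but not actually argued in the paper's proof; your derivation fills this in correctly: since $g$ is $W(R)$-invariant it passes through every difference quotient, giving $\mathcal{D}_{i}(gf)=g\mathcal{D}_{i}f-\omega x_{i}gf$, hence $g^{-1}\langle a,\nabla_{\kappa}\rangle g=\langle a,\nabla_{\kappa}\rangle-\omega\langle a,x\rangle$, and expanding $\tfrac{1}{2}\big\{\langle a,\nabla_{\kappa}\rangle,\,2\omega\langle a,x\rangle-\langle a,\nabla_{\kappa}\rangle\big\}$ indeed yields the stated $g^{-1}H_{a}g$ (the cross terms in $\tfrac12\{A_a^+,A_a^-\}$ cancel, so this is consistent with $H_{a}=\omega^{2}\langle a,x\rangle^{2}-\langle a,\nabla_{\kappa}\rangle^{2}$). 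For the core identity $[\mathcal{H},H_{a}]=0$ your primary route is genuinely different: you first prove the ladder relations $[\mathcal{H},A_{a}^{\pm}]=\mp2\omega A_{a}^{\pm}$ (which do follow from \eqref{Dbxsq} together with $[\Vert x\Vert^{2},\langle a,x\rangle]=0$ and $[\Delta_{\kappa},\langle a,\nabla_{\kappa}\rangle]=0$), and then the derivation property of the commutator annihilates $[\mathcal{H},\{A_{a}^{+},A_{a}^{-}\}]$ term by term. The paper instead expands $[H_{a},\mathcal{H}]$ directly into the two surviving mixed commutators and cancels them using \eqref{Dbxsq} and $[A^{2},B]=A[A,B]+[A,B]A$ --- precisely the fallback you sketch in your final sentence, with the correct signs. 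The two arguments consume the same inputs and are of comparable length; what your ladder route buys is the spectral content that justifies the terminology: $[\mathcal{H},A_{a}^{+}]=-2\omega A_{a}^{+}$ shows $A_{a}^{+}$ sends an $E$-eigenfunction of $\mathcal{H}$ to an $(E-2\omega)$-eigenfunction (lowering), and $A_{a}^{-}$ to an $(E+2\omega)$-eigenfunction (raising), information the paper's cancellation argument does not record.
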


\begin{proof}
From \eqref{D*D}, it follows that $(A_{a}^{+})^{\ast}=A_{a}^{-}$
and $H_{a}^{\ast}=H_{a}$. The commutator
\[
\big[\omega^{2}\langle a,x\rangle^{2}-\langle a,\nabla_{\kappa}\rangle^{2}, \mathcal{H}\big]
=-\big[\langle a,\nabla_{\kappa}\rangle^{2},\omega^{2}\Vert x\Vert^{2}\big]
-\big[\omega^{2}\langle a,x\rangle^{2},\Delta_{\kappa}\big]
\]
and expanding the right hand side with formulas~\eqref{Dbxsq}
and $\big[A^{2},B\big] =A[ A,B] +[ A,B] A$ shows ${[H_{a},\mathcal{H}] =0}$.
\end{proof}

\begin{Proposition}
\label{wHaw}
Suppose $w\in W(R)$, then $w^{-1}H_{a}w=H_{aw}$;
suppose $S\subset R_{+}$ and $S\cup(-S)$ is an $W(R) $-orbit $($closed under $v\rightarrow vw)$, then $\sum_{v\in S}H_{v}^{k}$ commutes with each $w\in W(R)$ for $k=1,2,3,\ldots$.
\end{Proposition}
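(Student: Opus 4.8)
The plan is to derive both assertions from the $W(R)$-equivariance of the two operators out of which $H_{a}$ is assembled. Writing the action of $w\in W(R)$ on functions as $(wf)(x)=f(xw)$, I would first record the two relations
\begin{equation*}
w^{-1}\langle a,x\rangle w=\langle aw,x\rangle,\qquad w^{-1}\langle a,\nabla_{\kappa}\rangle w=\langle aw,\nabla_{\kappa}\rangle .
\end{equation*}
Since $H_{a}=\omega^{2}\langle a,x\rangle^{2}-\langle a,\nabla_{\kappa}\rangle^{2}$ and conjugation by $w$ is an algebra automorphism (it respects products and differences), these two relations pass through the squares and immediately yield $w^{-1}H_{a}w=\omega^{2}\langle aw,x\rangle^{2}-\langle aw,\nabla_{\kappa}\rangle^{2}=H_{aw}$, which is the first assertion.

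The substance is therefore in the two equivariance relations. The first is elementary: $\langle a,xw^{-1}\rangle=\langle aw,x\rangle$ because $w$ is orthogonal ($\Vert xw\Vert=\Vert x\Vert$), so conjugating the multiplication operator $\langle a,x\rangle$ by $w$ just rotates the vector $a$. The second is the standard $W(R)$-equivariance of Dunkl operators, and I expect this to be the main technical point. I would establish it by applying $\langle a,\nabla_{\kappa}\rangle$ to $f(xw)$: the chain rule turns the gradient part into $\langle aw,\nabla f\rangle(xw)$, and for the reflection part one uses the conjugation identity $\sigma_{vw}=w^{-1}\sigma_{v}w$ (so that $x\sigma_{v}w=(xw)\sigma_{vw}$) together with $\langle x,v\rangle=\langle xw,vw\rangle$ and $\langle a,v\rangle=\langle aw,vw\rangle$. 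The crucial inputs are that the multiplicity function is constant on $W(R)$-orbits, $\kappa_{v}=\kappa_{vw}$, and that the summand $\kappa_{v}\langle a,v\rangle\langle x,v\rangle^{-1}\bigl(f(x)-f(x\sigma_{v})\bigr)$ is even under $v\mapsto -v$; reindexing $v\mapsto vw$ then carries the sum over $R_{+}$ to a sum over $R_{+}w$, which again meets each $\pm$-pair once, so it equals the sum over $R_{+}$. This reproduces $\langle aw,\nabla_{\kappa}\rangle$ applied to $f$ and translated by $w$.

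For the second statement I would feed the first into it: using $w^{-1}H_{v}w=H_{vw}$ and that conjugation commutes with taking powers,
\begin{equation*}
w^{-1}\Bigl(\textstyle\sum_{v\in S}H_{v}^{k}\Bigr)w=\sum_{v\in S}H_{vw}^{k}.
\end{equation*}
The key combinatorial observation is that $H_{-v}=H_{v}$, since $H_{v}=\omega^{2}\langle v,x\rangle^{2}-\langle v,\nabla_{\kappa}\rangle^{2}$ is even in its label. Because $S\cup(-S)$ is a single $W(R)$-orbit, right multiplication by $w$ permutes it; and for distinct $v,v'\in S\subset R_{+}$ one cannot have $vw=\pm v'w$ (this would force $v=\pm v'$, impossible in $R_{+}$). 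Hence $\{vw\colon v\in S\}$ hits each $\pm$-pair of the orbit exactly once, so $\{H_{vw}\colon v\in S\}=\{H_{v}\colon v\in S\}$ as multisets and $\sum_{v\in S}H_{vw}^{k}=\sum_{v\in S}H_{v}^{k}$. Thus $w^{-1}\bigl(\sum_{v\in S}H_{v}^{k}\bigr)w=\sum_{v\in S}H_{v}^{k}$, i.e.\ the operator commutes with every $w\in W(R)$, completing the argument. The only delicate points are the Dunkl-operator equivariance in the second paragraph and the $\pm$-pair bookkeeping here; everything else is formal.
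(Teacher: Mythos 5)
Your proposal is correct and follows essentially the same route as the paper: both reduce the first claim to the equivariance relations $w^{-1}\langle a,x\rangle w=\langle aw,x\rangle$ and $w^{-1}\langle a,\nabla_{\kappa}\rangle w=\langle aw,\nabla_{\kappa}\rangle$, the only difference being that the paper cites the Dunkl-operator equivariance from the literature and leaves the orbit bookkeeping implicit, whereas you prove the equivariance directly and spell out the $H_{-v}=H_{v}$ and $\pm$-pair argument for the second claim. No gaps; your extra detail is a faithful filling-in of what the paper's proof takes for granted.
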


\begin{proof}
This follows from $\langle a,\nabla_{\kappa}\rangle w=w\langle
aw,\nabla_{\kappa}\rangle$ (see~\cite[Proposition~6.4.3]{DunklXu2014}) and
\[
w(\langle aw,x\rangle p(x))=\langle aw,xw\rangle p(xw) =\langle
a,x\rangle wp(x)
\]
(because $w\in O_{N}(\mathbb{R})$).
\end{proof}

This produces a collection of self-adjoint operators commuting with $W(R) $ and $\mathcal{H}$.

\section{The dihedral groups}

For $m=3,4,\ldots$, the dihedral group $I_{2}(m)$ is the symmetry
group of the regular $m$-gon. We will use complex coordinates for $\mathbb{R}^{2}$:%
\[
z:=x_{1}+\mathrm{i}x_{2},\qquad
\overline{z}:=x_{1}-\mathrm{i}x_{2}.
\]
Let $\zeta:=\exp\big(\frac{2\pi\mathrm{i}}{m}\big) $, then the
reflections in $I_{2}(m) $ are $\sigma_{j}\colon (z,\overline{z})\rightarrow\big(\overline{z}\zeta^{j},z\zeta
^{-j}\big) $ ($0\leq j<m$), and the rotations are $\rho_{j}\colon (z,\overline{z}) \rightarrow\big(z\zeta^{j},\overline{z}\zeta
^{-j}\big) $. Then $\sigma_{k}\sigma_{j}\sigma_{k}=\sigma_{2k-j}$ and
$\rho_{k}^{-1}\sigma_{j}\rho_{k}=\sigma_{j+2k}$; when~$m$ is even, there are
two conjugacy classes $\{\sigma_{2j}\} $ and $\{\sigma_{2j+1}\}$ with
$0\leq j\leq\frac{m}{2}-1$. The real root vector
for $\sigma_{j}$ is $v_{j}:=\big(\sin\big(\frac{\pi j}{m}\big),-\cos\big(\frac{\pi j}{m}\big)\big)$
and $\langle x,v_{j}\rangle =\frac{\mathrm{i}}{2}\exp\bigl(-\frac{j\pi\mathrm{i}
}{m}\big) \big(z-\zeta^{j}\overline{z}\big)$. The Dunkl operators
are $\big(\partial_{z}:=\frac{\partial}{\partial z},\partial_{\overline{z}}
:=\frac{\partial}{\partial\overline{z}}\big)$
\begin{align*}
&Tf(z) =\partial_{z}f(z) +\sum_{j=0}%
^{m-1}\kappa_{j}\frac{f(z) -f\big(\overline{z}\zeta
^{j}\big)}{z-\overline{z}\zeta^{j}}=\frac{1}{2}(\mathcal{D}%
_{1}-\mathrm{i}\mathcal{D}_{2}) f,\\
&\overline{T}f(z) =\partial_{\overline{z}}f(z)
-\sum_{j=0}^{m-1}\kappa_{j}\frac{f(z) -f\big(\overline
{z}\zeta^{j}\big)}{z-\overline{z}\zeta^{j}}\zeta^{j}
=\frac{1}{2}(\mathcal{D}_{1}+\mathrm{i}\mathcal{D}_{2}) f.
\end{align*}
These imply $\Delta_{\kappa}=4T\overline{T}$. If $m$ is odd, then $\kappa
_{j}=\kappa$; if $m$ is even then $\kappa_{2j}=\kappa_{0}$ and $\kappa
_{2j+1}=\kappa_{1}$ for all $j$. Denote $H_{v_{j}}$ by $H_{j}$ and let
$\widehat{H}_{j}=g^{-1}H_{j}g$. In the complex coordinates,
\[
\langle v_{j},x\rangle =\frac{\mathrm{i}}{2}
\exp\biggl(-\frac{j\pi\mathrm{i}}{m}\biggr) \big(z-\zeta^{j}\overline{z}\big),\qquad
\langle v_{j},\nabla_{\kappa}\rangle =-\mathrm{i}\biggl(\exp\frac{j\pi\mathrm{i}}{m}\biggr) \big(T-\zeta^{-j}\overline{T}\big),
\]
\big(note $\big(\exp\frac{j\pi\mathrm{i}}{m}\big)^{2}=\zeta^{j}$\big), thus
\[
\langle v_{j},x\rangle \langle v_{j},\nabla_{\kappa}\rangle
=\frac{1}{2}\big(z-\zeta^{j}\overline{z}\big) \big(T-\zeta^{-j}\overline{T}\big)
\]
and
\[
\langle v_{j},\nabla_{\kappa}\rangle^{2}=-\zeta^{j}\big({T-\zeta^{-j}\overline{T}}\big)
^{2} =-\zeta^{j}T^{2}+2T\overline{T}-\zeta^{-j}\overline{T}^{2}.
\]
Then using \eqref{gHgD},
\begin{align}
&\widehat{H}_{j} =\zeta^{j}T^{2}-2T\overline{T}+\zeta^{-j}\overline{T}^{2}
 +\frac{\omega}{2}\big\{ \big(z-\zeta^{j}\overline{z}\big) \big(T-\zeta^{-j}\overline{T}\big) +\big(T-\zeta^{-j}\overline{T}\big)
\big(z-\zeta^{j}\overline{z}\big) \big\},\nonumber
\\
&g^{-1}\mathcal{H}g =-4T\overline{T}+\omega\big\{ zT+\overline
{z}\overline{T}+Tz+\overline{T}\overline{z}\big\}.\label{gHg}
\end{align}
In $\mathbb{R}^{2}$ there is only one angular momentum operator (up to scalar
multiplication), namely $x_{1}\mathcal{D}_{2}-x_{2}\mathcal{D}_{1}%
=\mathrm{i}\big(zT-\overline{z}\overline{T}\big) $. Set $\mathcal{J}%
:=zT-\overline{z}\overline{T}$.

The $\kappa$-harmonic polynomials can be found in \cite[Section~7.6]%
{DunklXu2014}; they are expressed in terms of Gegenbauer, respectively Jacobi,
polynomials, in case of odd $m$, respectively even $m$.

\section[Orthogonal basis of wavefunctions for B\_2]{Orthogonal basis of wavefunctions for $\boldsymbol{B_{2}}$}\label{WFbasis}

Henceforth, we specialize to the group $B_{2}=I_{2}(4) $. The
formulas in the previous section apply with $\zeta=\mathrm{i}$. Let
$\gamma_{\kappa}=2\kappa_{0}+2\kappa_{1}$. The weight function $h_{\kappa
}=\big\vert z^{2}-\overline{z}^{2}\big\vert^{\kappa_{0}}\big\vert
z^{2}+\overline{z}^{2}\big\vert^{\kappa_{1}}$. The group has five
irreducible representations: four of degree one and one of degree two. The
four multiplicative characters satisfy $\chi_{0}(\sigma_{k})
=1$, $\chi_{1}(\sigma_{k}) =(-1)^{k}$, $\chi
_{2}(\sigma_{k}) =(-1)^{k+1}$, $\chi_{3}(\sigma_{k}) =-1$, $0\leq k\leq3$. The basis of wavefunctions
(solutions of $\mathcal{H}\psi=2\omega(n+1+\gamma_{\kappa})
\psi$) are denoted $\psi_{n-j,j}$ where the subscript refers to a dominant
monomial in the polynomial part (ignoring $g$) and monomials $z^{a}%
\overline{z}^{b}$ ($a+b=n$) are ordered by $\vert a-b\vert $. The
factor $g$ in the wavefunctions will be omitted and we use operators in the
form $g^{-1}Ag$ acting on polynomials.

The basis functions are all expressed in the following:%
\begin{align*}
R_{n}^{(\alpha,\beta)}(z) :={}&(z\overline{z})^{2n}P_{n}^{(\alpha,\beta)}
\bigg(\frac{z^{4}+\overline{z}^{4}}{2z^{2}\overline{z}^{2}}\bigg)
\\
={}&\frac{(-1)^{n}}{2^{2n}n!}\sum_{j=0}^{n}\binom{n}{j}(-n-\alpha)_{n-j}( -n-\beta)_{j}\big(z^{2}%
-\overline{z}^{2}\big)^{2j}\big(z^{2}+\overline{z}^{2}\big)^{2n-2j}.
\end{align*}
The Jacobi polynomial $P_{n}^{(\alpha,\beta)}(t)$ of degree $n$ and indices $\alpha$, $\beta$ can be defined as (see \cite[Proposition~4.14]{DunklXu2014})
\[
P_{n}^{(\alpha,\beta)}(t) :=\frac{(\alpha+1)_{n}}{n!}\bigg(\frac{1+t}{2}\bigg)^{n}\sum_{j=0}%
^{n}\frac{(-n)_{j}(-n-\beta)_{j}}{(\alpha+1)_{j}~j!}\bigg(\frac{t-1}{t+1}\bigg)^{j};
\]
this formula leads to the expression stated above. Then define%
\begin{gather*}
p_{4n,00}(z) :=R_{n}^{(\kappa_{0}-1/2,\kappa_{1}-1/2)}(z),
\\
p_{4n,11}(z) :=\big(z^{4}-\overline{z}^{4}\big)R_{n-1}^{(\kappa_{0}+1/2,\kappa_{1}+1/2)}(z),
\\
p_{4n+2,10}(z) :=\big(z^{2}+\overline{z}^{2}\big)R_{n}^{(\kappa_{0}-1/2,\kappa_{1}+1/2)}(z),
\\
p_{4n+2,01}(z) :=\big(z^{2}-\overline{z}^{2}\big)R_{n}^{(\kappa_{0}+1/2,\kappa_{1}-1/2)}(z).
\end{gather*}
These are of isotype $\chi_{0}$, $\chi_{3}$, $\chi_{1}$, $\chi_{2}$, respectively. The
$L^{2}$-norms are necessary for normalization, and are derived from%
\begin{gather*}
\int_{0}^{\pi/2}\sin^{2\alpha}\theta\cos^{2\beta}\theta P_{n}^{\left(\alpha-\frac{1}{2},\beta-\frac{1}{2}\right)}(\cos2\theta)
P_{k}^{\left(\alpha-\frac{1}{2},\beta-\frac{1}{2}\right)}(\cos2\theta)\, \mathrm{d}\theta
\\ \qquad
{}=\frac{1}{2}\delta_{nk}B\bigg(\alpha+\frac{1}{2},\beta+\frac{1}{2}\bigg)
\frac{\big(\alpha+\frac{1}{2}\big)_{n}\big(\beta+\frac{1}{2}\big)_{n}(\alpha+\beta+n)}{n!(\alpha
+\beta+1)_{n}(\alpha+\beta+2n)}.
\end{gather*}
The beta function $B$ is defined by a definite integral and satisfies
$B(a,b) =\Gamma(a) \Gamma(b)/\Gamma(a+b) $. Denote for polynomials $p(z,\overline{z})$%
\begin{align*}
&\Vert p\Vert_{\mathbb{T}}^{2} :=\int_{-\pi}^{\pi}\big\vert
p\big({\rm e}^{\mathrm{i}\theta}\big) \big\vert^{2}h_{\kappa}\big({\rm e}^{\mathrm{i}\theta}\big)^{2}\,\mathrm{d}\theta,
\\
&\Vert p\Vert^{2} :=\int_{0}^{\infty}\exp\bigl(-\omega
r^{2}\bigr) r^{2\gamma_{\kappa}+1}\mathrm{d}r\int_{-\pi}^{\pi}\big\vert
p\big(r{\rm e}^{\mathrm{i}\theta}\big) \big\vert^{2}h_{\kappa}\big({\rm e}^{\mathrm{i}\theta}\big)^{2}\,\mathrm{d}\theta,
\end{align*}
then%
\[
\Vert 1\Vert_{\mathbb{T}}^{2} =2^{\gamma_{\kappa}+1}B\bigg(\kappa_{0}+\frac{1}{2},\kappa_{1}+\frac{1}{2}\bigg).
\]
The squared norms are%
\begin{gather*}
\begin{split}
&\Vert p_{4n,00}\Vert_{\mathbb{T}}^{2} =\frac{\big(\kappa_{0}+\frac{1}{2}\big)_{n}\big(\kappa_{1}+\frac{1}{2}\big)_{n} (\kappa_{0}+\kappa_{1}+n)}{n!(\kappa_{0}+\kappa_{1}+1)_{n}
(\kappa_{0}+\kappa_{1}+2n)}\Vert 1\Vert_{\mathbb{T}}^{2},
\\
&\Vert p_{4n,11}\Vert_{\mathbb{T}}^{2} =16\frac{\big(\kappa_{0}+\frac{1}{2}\big)_{n}\big(\kappa_{1}+\frac{1}{2}\big)_{n} (\kappa_{0}+\kappa_{1}+n+1)}{(n-1) !(\kappa_{0}+\kappa_{1}+1)_{n}(\kappa_{0}+\kappa_{1}+2n)}
\Vert 1\Vert_{\mathbb{T}}^{2},
\\
&\Vert p_{4n+2,10}\Vert_{\mathbb{T}}^{2} =\frac{4\big(\kappa_{0}+\frac{1}{2}\big)_{n}\big(\kappa_{1}+\frac{3}{2}\big)
_{n}}{n!(\kappa_{0}+\kappa_{1}+1)_{n}(\kappa_{0}
+\kappa_{1}+2n+1)}\Vert 1\Vert_{\mathbb{T}}^{2},
\\
&\Vert p_{4n+2,01}\Vert_{\mathbb{T}}^{2} =\frac{4\big(\kappa_{0}+\frac{3}{2}\big)_{n}\big(\kappa_{1}+\frac{1}{2}\big)
_{n}}{n!(\kappa_{0}+\kappa_{1}+1)_{n}(\kappa_{0}
+\kappa_{1}+2n+1)}\Vert 1\Vert_{\mathbb{T}}^{2}.
\end{split}
\end{gather*}

For the odd degrees,
\begin{gather}
p_{4n+1}(z):=\bigg\{p_{4n,00}(z) +\frac{1}{4}p_{4n,11}(z)\bigg\},
\label{p4n1}
\\
p_{4n+3}(z):=z\bigg\{\bigg(n+\kappa_{0}+\frac{1}{2}\bigg) p_{4n+2,10}(z) +\bigg(n+\kappa_{1}+\frac{1}{2}\bigg)p_{4n+2,01}(z)\bigg\}.
\label{p4n3}%
\end{gather}
From the orthogonality relations $p_{4n,00}\perp p_{4n,11}$ and $p_{4n+2,10}%
\perp p_{4n+2,01}$ (different isotypes),
\begin{gather*}
\Vert p_{4n+1}\Vert_{\mathbb{T}}^{2} =\Vert p_{4n,00}\Vert_{\mathbb{T}}^{2}
+\frac{1}{16}\Vert p_{4n,11}\Vert_{\mathbb{T}}^{2}
=\frac{\big(\kappa_{0}+\frac{1}{2}\big)_{n}\big(\kappa_{1}+\frac{1}{2}\big)_{n}} {n!(\kappa_{0}+\kappa_{1}+1)_{n}}\Vert 1\Vert_{\mathbb{T}}^{2},
\\
\Vert p_{4n+3}\Vert_{\mathbb{T}}^{2} =\bigg(n+\kappa_{0}+\frac{1}{2}\bigg)^{2}
\Vert p_{4n+2,10}\Vert_{\mathbb{T}}^{2}+\bigg(n+\kappa_{1}+\frac{1}{2}\bigg)^{2}\Vert p_{4n+2,01}%
\Vert_{\mathbb{T}}^{2}
\\ \hphantom{\Vert p_{4n+3}\Vert_{\mathbb{T}}^{2}}
=4\frac{\big(\kappa_{0}+\frac{1}{2}\big)_{n+1}\big(\kappa_{1}+\frac{1}{2}\big)_{n+1}} {n!(\kappa_{0}+\kappa_{1}+1)_{n}}\Vert 1\Vert_{\mathbb{T}}^{2}.
\end{gather*}
Next we list the orthogonal basis, which involves the Laguerre polynomials.
The subscript notation may appear strange, but it makes it easy to identify
the isotype and every possibility of $(n-j,j)$ can be found by
suitably replacing $n$ (the trailing factor $g$ is understood),
\begin{gather*}
\psi_{4n+j,j}(z) =p_{4n,00}(z) L_{j}^{(\gamma_{\kappa}+4n)}( \omega z\overline{z}),
\\
\psi_{4n+2+j,j}(z) =p_{4n+2,10}(z)L_{j}^{(\gamma_{\kappa}+4n+2)}(\omega z\overline{z}),
\\
\psi_{j,4n+j}(z) =p_{4n,11}(z) L_{j}^{(\gamma_{\kappa}+4n)}( \omega z\overline{z}),
\\
\psi_{j,4n+2+j}(z) =p_{4n+2,01}(z)L_{j}^{(\gamma_{\kappa}+4n+2)}(\omega z\overline{z}).
\end{gather*}
In this list, $\sigma_{0}\psi_{2n-j,j}=\psi_{2n-j,j}$ and $\sigma_{0}%
\psi_{j,2n-j}=-\psi_{j,2n-j,}$ for $0\leq j\leq n$ ($j<n$ for the second
case). For odd degrees,
\begin{gather*}
\psi_{4n+1+j,j}(z) =p_{4n+1}(z) L_{j}^{(\gamma_{\kappa}+4n+1)}( \omega z\overline{z}), \\
\psi_{4n+3+j,j}(z) =p_{4n+3}(z) L_{j}^{(\gamma_{\kappa}+4n+3)}( \omega z\overline{z}),\\
\psi_{j,4n+1+j}(z) =\sigma_{0}\psi_{4n+1+j,j}(z),\\
\psi_{j,4n+3+j}( z) =\sigma_{0}\psi_{4n+3+j,j}(z).
\end{gather*}
By construction, $\mathcal{H}\psi_{n-j,j}=E_{n}\psi_{n-j,j}$, where the energy
eigenvalue is $E_{n}:=2\omega(n+2\kappa_{0}+2\kappa_{1}+1)$.
The squared norms of the $\psi$ follow from the formula%
\[
\Vert \psi\Vert^{2}=\frac{1}{2}\omega^{-(n+\gamma_{\kappa
}+1)}\frac{\Gamma(\gamma_{\kappa}+n+\ell+1)}{\ell
!}\Vert p\Vert_{\mathbb{T}}^{2},
\]
where $p(z) $ is homogeneous of degree $n$ and $\psi(z) =p(z) L_{\ell}^{(\gamma_{\kappa}+n)
}(\omega z\overline{z}) $. When $\omega=1$, the wavefunctions
$\psi_{n-j,j}$ are eigenfunctions of the Dunkl transform with eigenvalue
$(-\mathrm{i})^{n}$ (see \cite[Theorem~7.7.5]{DunklXu2014}).

\section{Some self-adjoint operators}

\subsection{General properties}

In this section, we are concerned with finding the action of operators which
commute with $\mathcal{H}$ on the basis functions described above. Suppose $A$
is such an operator and $A$ is self-adjoint (in~$L^{2}\big(\mathbb{R}^{2},h_{\kappa}^{2}\mathrm{d}m_{2}\big) $, then for any $(n-j,j) $ the polynomial $A\psi_{n-j,j}$ is an eigenfunction of
$\mathcal{H}$ with the same eigenvalue $2\omega(n+1+\gamma_{\kappa})$ and has an expansion $\sum_{i=0}^{n}c_{i}\psi_{n-i,i}$ (note that
$\mathrm{d}m_{2}$ denotes the $\mathbb{R}^{2}$ Lebesgue measure and equals
$r\mathrm{d}r~\mathrm{d\theta}$ for $z=r{\rm e}^{\mathrm{i}\theta}$). Suppose it is
known that the top-degree (nonzero) monomials $z^{n-i}\overline{z}^{i}$ in $A\psi
_{n-j,j}$ satisfy $m\leq i\leq N-m$, then $\psi_{n-k,k}$ for $k<m$ or
$n-m<k\leq n$ cannot appear in the expansion of $A\psi_{n-j,j}$. This is an
implicit inductive argument: if $z^{n}$ and $\overline{z}^{n}$ do not appear, then neither
$\psi_{n,0}$ nor $\psi_{0,n}$ can appear in the expansion, now consider
$z^{n-1}\overline{z}$ and $z\overline{z}^{n-1}$, $\psi_{n-1,1}$ and $\psi_{1,n-1}$ and so on. It
also follows that it suffices to consider the top-degree terms to find the
coefficients of the expansion. The top degree terms of~$\psi_{n-j,j}$ are
scalar multiples of $z^{n-j}\overline{z}^{j}\pm z^{j}\overline{z}^{n-j}$ if
$n$ is even, and of $z^{n-j}\overline{z}^{j}$ (or $z^{j}\overline{z}^{n-j}$)
if $n$ is odd and $n-j>j$ (or $n-j<j$).

\begin{Definition}
For a polynomial $p(z,\overline{z})$, let $\mathcal{C}(p,z^{m}\overline{z}^{n})$ denote the coefficient of $z^{m}\overline{z}^{n}$ in the expansion of $p$. If $p$ can be expanded in a series of
wavefunctions, then $\mathcal{C}(p,\psi_{n-j,j}) $ denotes the
coefficient of $\psi_{n-j,j}$.
\end{Definition}

Suppose, as above, that $[ A,\mathcal{H}] =0$ and $A$ is
self-adjoint, then
\begin{equation}
\mathcal{C}(A\psi_{n-j,j},\psi_{n-k,k}) \Vert \psi
_{n-k,k}\Vert^{2}=\overline{\mathcal{C}}(A\psi_{n-k,k},\psi_{n-j,j}) \Vert \psi_{n-j,j}\Vert^{2};
\label{CAij}%
\end{equation}
generally the coefficients we use are real and the complex conjugate on
$\mathcal{C}$ can be omitted. In~particular, $\mathcal{C}(A\psi_{n-j,j},\psi_{n-k,k}) =0$ implies $\mathcal{C}(A\psi_{n-k,k},\psi_{n-j,j}) =0$.

\subsection{\label{AngMo}Angular momentum}

Here $\mathcal{J}=zT-\overline{z}\overline{T}$, from Proposition~\ref{Jprops}
we have $\mathcal{J}^{\ast}=-\mathcal{J}$ and $[ \mathcal{J},\mathcal{H}] =0$. Also $g^{-1}\mathcal{J}g=\mathcal{J}$ (because
$\mathcal{J}(z\overline{z})^{k}=0$). We determine the effect of
$\mathcal{J}$ on $\psi_{n-j,j}$ by considering the dominant top-degree
monomials. This suffices because $[ \mathcal{J},\Delta_{\kappa}]
=0$ and the image of a $\kappa$-harmonic polynomial under $\mathcal{J}$ is
$\kappa$-harmonic, and there are only two (independent) $\kappa$-harmonic
polynomials of each degree ($\geq1$)
\begin{align*}
\mathcal{J}z^{n} &=nz^{n}+\sum_{j=0}^{3}\kappa_{j\operatorname{mod}2}%
\frac{z^{n}-(\mathrm{i}^{j}\overline{z})^{n}}{z-\mathrm{i}%
^{j}\overline{z}}(z+\mathrm{i}^{j}\overline{z})
\\
 &=nz^{n}+2(\kappa_{0}+\kappa_{1}) z^{n}+\big\{ \kappa
_{0}(1+(-1)^{n}) +\kappa_{1}\big(\mathrm{i}^{n}+(-\mathrm{i})^{n}\big) \big\} \overline
{z}^{n}+\cdots
\\
 &=(n+2\kappa_{0}+2\kappa_{1}) z^{n}+\big(1+(-1)^{n}\big) \big( \kappa_{0}+\mathrm{i}^{n}\kappa_{1}\big)
\overline{z}^{n}+\cdots,
\end{align*}
omitting terms like $z^{n-j}\overline{z}^{j}$ with $1\leq j<n$. Also
\[
\mathcal{J}\overline{z}^{n}=-(n+2\kappa_{0}+2\kappa_{1})
\overline{z}^{n}-\big(1+(-1)^{n}\big) \big(\kappa_{0}+\mathrm{i}^{n}\kappa_{1}\big) z^{n}+\cdots
\]
because $\sigma_{0}\mathcal{J}\sigma_{0}=-\mathcal{J}$. If $n$ is odd, then
\[
\mathcal{J}z^{n}=(n+2\kappa_{0}+2\kappa_{1}) z^{n}+\cdots\qquad \text{and}\qquad
\mathcal{J}^{2}z^{n}=(n+2\kappa_{0}+2\kappa_{1})^{2}z^{n}+\cdots.
\]
Thus
\begin{align*}
&\mathcal{J}\psi_{n+j,j} =\mathcal{J}\big(p_{n}(z)
L_{j}^{(\gamma_{\kappa}+n)}(\omega z\overline{z})
\big) =(\mathcal{J}p_{n}(z) ) L_{j}^{(\gamma_{\kappa}+n)}( \omega z\overline{z})
 =(n+2\kappa_{0}+2\kappa_{1}) \psi_{n+j,j},
\\
&\mathcal{J}\psi_{j,n+j} =-(n+2\kappa_{0}+2\kappa_{1})\psi_{j,n+j}.
\end{align*}
If $n=0\operatorname{mod}4$, then
\begin{align*}
&\mathcal{J}z^{n} =(n+2\kappa_{0}+2\kappa_{1}) z^{n}+2(\kappa_{0}+\kappa_{1}) \overline{z}^{n}+\cdots,
\\
&\mathcal{J}z^{n} =-(n+2\kappa_{0}+2\kappa_{1}) \overline
{z}^{n}-2(\kappa_{0}+\kappa_{1}) \overline{z}^{n}+\cdots
\end{align*}
and thus%
\begin{align*}
& \mathcal{J}^{2}z^{n} =(n+2\kappa_{0}+2\kappa_{1})
\mathcal{J}z^{n}+2(\kappa_{0}+\kappa_{1}) \mathcal{J}
\overline{z}^{n} =n(n+4\kappa_{0}+4\kappa_{1}) z^{n}+\cdots,
\\
& \mathcal{J}^{2}\overline{z}^{n} =n(n+4\kappa_{0}+4\kappa_{1}) \overline{z}^{n}+\cdots.
\end{align*}

If $n=2\operatorname{mod}4$, then
\begin{align*}
& \mathcal{J}z^{n} =(n+2\kappa_{0}+2\kappa_{1}) z^{n}+2(\kappa_{0}-\kappa_{1}) \overline{z}^{n}+\cdots,
\\
 & \mathcal{J}z^{n}=-(n+2\kappa_{0}+2\kappa_{1}) \overline
{z}^{n}-2(\kappa_{0}-\kappa_{1}) \overline{z}^{n}+\cdots,
\\
&\mathcal{J}^{2}z^{n} =(n+4\kappa_{0}) (n+4\kappa_{1}) z^{n}+\cdots,
\\
&\mathcal{J}^{2}\overline{z}^{n} =(n+4\kappa_{0}) (n+4\kappa_{1}) \overline{z}^{n}+\cdots.
\end{align*}
Thus
\begin{align}
&\mathcal{J}^{2}\psi_{4n+j,j} =16n(n+\kappa_{0}+\kappa_{1})\psi_{4n+j,j},\label{J4n}
\\
&\mathcal{J}^{2}\psi_{j,4n+j} =16n(n+\kappa_{0}+\kappa_{1})\psi_{j,4n+j},\label{J4nR}
\\
&\mathcal{J}^{2}\psi_{4n+2+j,j} =4(2n+2\kappa_{0}+1) (2n+2\kappa_{1}+1) \psi_{4n+2+j,j},\label{J4n2}
\\
&\mathcal{J}^{2}\psi_{j,4n+2+j} =4(2n+2\kappa_{0}+1) (2n+2\kappa_{1}+1) \psi_{j,4n+2+j}. \label{J4n2R}%
\end{align}

\subsection{Raising and lowering operators}

Formula \eqref{gHg} specializes to%
\[
\widehat{H}_{j}=\mathrm{i}^{j}T^{2}-2T\overline{T}+\mathrm{i}^{-j}\overline
{T}^{2}+\frac{\omega}{2}\big\{ \big(z-\mathrm{i}^{j}\overline{z}\big)
\big(T-\mathrm{i}^{-j}\overline{T}\big) +\big(T-\mathrm{i}
^{-j}\overline{T}\big) \big(z-\mathrm{i}^{j}\overline{z}\big)\big\}.
\]

\begin{Proposition}
$H_{0}+H_{2}=\mathcal{H}=H_{1}+H_{3}$.
\end{Proposition}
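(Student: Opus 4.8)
The plan is to reduce the claim to a basis-independent identity for $\Vert x\Vert^{2}$ and $\Delta_{\kappa}$, using the observation that the two conjugacy classes of reflections in $B_{2}=I_{2}(4)$ give orbits of root vectors that each form an orthonormal basis of $\mathbb{R}^{2}$. First I would list the four unit root vectors from $v_{j}=\big(\sin\frac{\pi j}{4},-\cos\frac{\pi j}{4}\big)$: explicitly $v_{0}=(0,-1)$, $v_{2}=(1,0)$ and $v_{1}=\frac{1}{\sqrt{2}}(1,-1)$, $v_{3}=\frac{1}{\sqrt{2}}(1,1)$. A one-line check gives $\langle v_{0},v_{2}\rangle=0$ and $\langle v_{1},v_{3}\rangle=0$, so $\{v_{0},v_{2}\}$ and $\{v_{1},v_{3}\}$ are orthonormal bases of $\mathbb{R}^{2}$, matching the even/odd splitting of reflections recorded for even $m$.

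The key step is the general fact that, for any orthonormal basis $\{e_{1},e_{2}\}$ of $\mathbb{R}^{2}$, the completeness relation $\sum_{i}(e_{i})_{k}(e_{i})_{l}=\delta_{kl}$ yields
\[
\sum_{i}\langle e_{i},x\rangle^{2}=\sum_{k}x_{k}^{2}=\Vert x\Vert^{2}\qquad\text{and}\qquad \sum_{i}\langle e_{i},\nabla_{\kappa}\rangle^{2}=\sum_{k}\mathcal{D}_{k}^{2}=\Delta_{\kappa},
\]
the second because $\big(\sum_{k}(e_{i})_{k}\mathcal{D}_{k}\big)^{2}=\sum_{k,l}(e_{i})_{k}(e_{i})_{l}\mathcal{D}_{k}\mathcal{D}_{l}$ collapses under $\sum_{i}$ to $\sum_{k}\mathcal{D}_{k}^{2}$. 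Since $H_{j}=\omega^{2}\langle v_{j},x\rangle^{2}-\langle v_{j},\nabla_{\kappa}\rangle^{2}$, applying this to each orbit gives
\[
H_{0}+H_{2}=\omega^{2}\Vert x\Vert^{2}-\Delta_{\kappa}=\mathcal{H}\qquad\text{and}\qquad H_{1}+H_{3}=\omega^{2}\Vert x\Vert^{2}-\Delta_{\kappa}=\mathcal{H},
\]
which is the assertion.

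I expect no real obstacle. The only point needing care is that $\langle v_{j},x\rangle$ and $\langle v_{j},\nabla_{\kappa}\rangle$ do not commute individually (the Dunkl operators carry reflection terms), but this is harmless: each $H_{j}$ is already a difference of squares, and the orthonormal-basis reduction is applied separately to the $\Vert x\Vert^{2}$ and $\Delta_{\kappa}$ pieces, where only the mutual commutativity $\mathcal{D}_{k}\mathcal{D}_{l}=\mathcal{D}_{l}\mathcal{D}_{k}$ is used. As an independent check I would recompute in complex coordinates from $\langle v_{j},\nabla_{\kappa}\rangle^{2}=-\zeta^{j}T^{2}+2T\overline{T}-\zeta^{-j}\overline{T}^{2}$ and $\langle v_{j},x\rangle^{2}=-\frac{1}{4}\zeta^{-j}\big(z-\zeta^{j}\overline{z}\big)^{2}$: with $\zeta=\mathrm{i}$ the coefficients of $z^{2},\overline{z}^{2},T^{2},\overline{T}^{2}$ cancel pairwise over each orbit since $\zeta^{0}+\zeta^{2}=0$ and $\zeta^{1}+\zeta^{3}=0$, while the surviving $z\overline{z}$ and $T\overline{T}$ terms combine through $z\overline{z}=\Vert x\Vert^{2}$ and $4T\overline{T}=\Delta_{\kappa}$ to reproduce $\mathcal{H}$.
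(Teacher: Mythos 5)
Your proof is correct, but it takes a genuinely different route from the paper's. The paper argues in complex coordinates with the conjugated operators $\widehat{H}_{j}=g^{-1}H_{j}g$: it writes out $\widehat{H}_{2j}$ (and likewise $\widehat{H}_{2j+1}$) explicitly and observes that the unwanted part $T^{2}+\overline{T}^{2}-\frac{\omega}{2}\big\{z\overline{T}+\overline{z}T+\overline{T}z+T\overline{z}\big\}$ carries the sign $(-1)^{j}$, hence cancels pairwise when summing over a conjugacy class, leaving exactly $g^{-1}\mathcal{H}g=-4T\overline{T}+\omega\big\{zT+Tz+\overline{z}\overline{T}+\overline{T}\overline{z}\big\}$. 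You instead work directly with $H_{j}=\omega^{2}\langle v_{j},x\rangle^{2}-\langle v_{j},\nabla_{\kappa}\rangle^{2}$ in Cartesian coordinates and reduce the claim to the completeness relation $\sum_{i}(e_{i})_{k}(e_{i})_{l}=\delta_{kl}$ for the orthonormal bases $\{v_{0},v_{2}\}$ and $\{v_{1},v_{3}\}$, which is legitimate since the $v_{j}$ are unit vectors and $\mathcal{H}=\omega^{2}\Vert x\Vert^{2}-\Delta_{\kappa}$. Your argument is more elementary (no conjugation by $g$, no complex-coordinate formulas) and more conceptual: it exposes the reason the identity holds -- each conjugacy class of reflections in $B_{2}$ supplies unit roots resolving the identity -- and it generalizes at once (for $I_{2}(m)$ the full positive root set is a tight frame, giving $\sum_{j=0}^{m-1}H_{j}=\frac{m}{2}\mathcal{H}$). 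One small refinement: your cautionary remark about noncommutativity is even less of an issue than you suggest, because the expansion $\langle e_{i},\nabla_{\kappa}\rangle^{2}=\sum_{k,l}(e_{i})_{k}(e_{i})_{l}\mathcal{D}_{k}\mathcal{D}_{l}$ preserves operator order and the contraction with $\delta_{kl}$ yields $\sum_{k}\mathcal{D}_{k}^{2}$ without ever invoking $\mathcal{D}_{k}\mathcal{D}_{l}=\mathcal{D}_{l}\mathcal{D}_{k}$. What the paper's computation buys, on the other hand, is the explicit complex-coordinate form of the $\widehat{H}_{j}$, which is the workhorse for the coefficient calculations in the later sections.
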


\begin{proof}
We use the polynomial parts $\widehat{H}_{j}$. First%
\begin{align*}
&\widehat{H}_{2j} =-2T\overline{T}+\frac{\omega}{2}\big\{ zT+Tz+\overline
{z}\overline{T}+\overline{T}\overline{z}\big\}
 +(-1)^{j}\bigg(T^{2}+\overline{T}^{2}-\frac{\omega}
{2}\big\{ z\overline{T}+\overline{z}T+\overline{T}z+T\overline{z}\big\}\bigg),
\end{align*}
thus $\widehat{H}_{0}+\widehat{H}_{2}=-4T\overline{T}+\omega\big\{
zT+Tz+\overline{z}\overline{T}+\overline{T}\overline{z}\big\}
=g^{-1}\mathcal{H}g$. Similarly, $\widehat{H}_{1}+ \widehat{H}_{3}%
=g^{-1}\mathcal{H}g$ (using $\mathrm{i}^{2j+1}=(-1)^{j}\mathrm{i}$).
\end{proof}

\begin{Corollary}
$[H_{0},H_{2}] =0$ and $[H_{1},H_{3}] =0$.
\end{Corollary}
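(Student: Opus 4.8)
The plan is to obtain both identities as immediate formal consequences of two results already in hand: first, that each individual operator $H_a$ commutes with $\mathcal{H}$, which is the content of Proposition~\ref{[hH'0} and holds for every $a\in\mathbb{R}^{2}$; and second, the two decompositions $\mathcal{H}=H_{0}+H_{2}=H_{1}+H_{3}$ just established in the preceding Proposition. No fresh computation with the Dunkl operators $T,\overline{T}$ is needed, since the reflection-group structure has already been absorbed into those two facts.

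For the even case I would start from $[\mathcal{H},H_{0}]=0$ and substitute $\mathcal{H}=H_{0}+H_{2}$. Using bilinearity of the commutator together with $[H_{0},H_{0}]=0$,
\[
0=[\mathcal{H},H_{0}]=[H_{0}+H_{2},H_{0}]=[H_{2},H_{0}]=-[H_{0},H_{2}],
\]
so $[H_{0},H_{2}]=0$. The odd case is word-for-word the same with $\mathcal{H}=H_{1}+H_{3}$:
\[
0=[\mathcal{H},H_{1}]=[H_{1}+H_{3},H_{1}]=[H_{3},H_{1}]=-[H_{1},H_{3}],
\]
yielding $[H_{1},H_{3}]=0$. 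One could equally well anchor the argument at $H_{2}$ (resp.\ $H_{3}$); the symmetry of the decomposition guarantees the same conclusion.

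There is essentially no obstacle here: the only point requiring care is that Proposition~\ref{[hH'0} be invoked for the specific vectors defining $H_{0},\dots,H_{3}$, which is legitimate because that proposition is stated for arbitrary nonzero $a$. It is worth remarking that the identical two-line argument shows $[H_{a},H_{b}]=0$ whenever $H_{a}+H_{b}=\mathcal{H}$, so the corollary is really a statement about the pairing of the reflections into complementary conjugacy classes rather than about $B_{2}$ in particular.
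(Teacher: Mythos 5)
Your proof is correct and is essentially the paper's own argument: the paper likewise combines $[\mathcal{H},H_{0}]=0$ (Proposition~\ref{[hH'0}) with the decomposition $H_{2}=\mathcal{H}-H_{0}$, writing $H_{0}H_{2}=H_{0}(\mathcal{H}-H_{0})$ and commuting. Your substitution into the commutator rather than into the product is the same computation in a different order, so nothing further is needed.
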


\begin{proof}
$H_{0}H_{2}=H_{0}(\mathcal{H}-H_{0})$ and $[H_{0},\mathcal{H}] =0$ by Proposition~\ref{[hH'0}.
\end{proof}

The formula $w^{-1}H_{a}w=H_{aw}$ (see Proposition~\ref{wHaw}) implies
$\rho_{3}\widehat{H}_{0}\rho_{1}=\widehat{H}_{2}$. To find the effect of~$\rho_{1}$ or $\rho_{3}$ consider the leading term in $\psi_{m+j,j}=p_{m}(z) L_{j}^{(\gamma_{\kappa}+m)}(\omega z\overline{z})$ (total degree $n=m+2j$), a~scalar multiple of
$z^{m+\ell}\overline{z}^{\ell}$; examination of each of the formulas for
$p_{m}$ shows that each monomial $z^{a}\overline{z}^{b}$ satisfies
$a-b=m\operatorname{mod}4$, this also applies to $\psi_{j,m+j}$ except for
the odd case where the leading term is $z^{j}\overline{z}^{j+m}$ and
$a-b=-m\operatorname{mod}4$. Also $\rho_{1}(z^{a}\overline{z}^{b}) =(\mathrm{i}z)^{a}(-\mathrm{i}\overline
{z})^{b}=\mathrm{i}^{a-b}z^{a}\overline{z}^{b}=\mathrm{i}^{m}z^{a}\overline{z}^{b}$.
Thus by replacing $m$ by $n-2j$, we obtain $\rho
_{1}\psi_{n-j,j}=\mathrm{i}^{n-2j}\psi_{n-j,j}$; this applies to all $n$ (if
$n$ is even then $\mathrm{i}^{n-2j}=\mathrm{i}^{2j-n}$). Replace $\mathrm{i}$
by $-\mathrm{i}$ to find $\rho_{3}\psi_{n-j,j}$.

\begin{Proposition}
\label{H0H2}
Suppose $\widehat{H}_{0}\psi_{n-j,j}=\sum_{j=0}^{n}c_{j,i}\psi_{n-i,i}$, then $(1)$ $c_{j,j}=\frac{1}{2}E_{n}$, $(2)$ $i=j\operatorname{mod}2$ and $i\neq j$ implies $c_{i,j}=0$, $(3)$ $\widehat{H}_{2}\psi_{n-j,j}=\frac{1}{2}E_{n}\psi_{n-j,j}-\sum\{c_{j,i}\psi_{n-i,i}
\colon j-i=1\operatorname{mod}2\} $.
\end{Proposition}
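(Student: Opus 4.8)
The plan is to avoid computing $\widehat{H}_{0}$ explicitly on the basis and instead exploit the quarter-turn symmetry linking $\widehat{H}_{0}$ to $\widehat{H}_{2}$, together with the decomposition $H_{0}+H_{2}=\mathcal{H}$ from the preceding Proposition. All three assertions will follow from a single algebraic identity relating the $\widehat{H}_{0}$- and $\widehat{H}_{2}$-expansions of $\psi_{n-j,j}$, so no direct manipulation of the fourth-order operator is required.

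First I would use the conjugation relation $\rho_{3}\widehat{H}_{0}\rho_{1}=\widehat{H}_{2}$ recorded just above the proposition (the consequence of $w^{-1}H_{a}w=H_{aw}$). Applying $\rho_{1}$ to the eigenvector $\psi_{n-j,j}$ produces the scalar $\mathrm{i}^{\,n-2j}$, then $\widehat{H}_{0}$ produces the assumed expansion, and finally $\rho_{3}$ acts on each $\psi_{n-i,i}$ by $\mathrm{i}^{\,2i-n}$ (replace $\mathrm{i}$ by $-\mathrm{i}=\mathrm{i}^{-1}$ in the $\rho_{1}$-eigenvalue). Substituting $\widehat{H}_{0}\psi_{n-j,j}=\sum_{i}c_{j,i}\psi_{n-i,i}$ gives
\[
\widehat{H}_{2}\psi_{n-j,j}=\mathrm{i}^{\,n-2j}\sum_{i}c_{j,i}\,\mathrm{i}^{\,2i-n}\psi_{n-i,i}=\sum_{i}(-1)^{i-j}c_{j,i}\psi_{n-i,i}.
\]
This is the crux of the argument: conjugation by the quarter-turn converts the $\widehat{H}_{0}$-coefficients into the $\widehat{H}_{2}$-coefficients, inserting exactly the alternating sign $(-1)^{i-j}$.

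Next I would add the two expansions and invoke $H_{0}+H_{2}=\mathcal{H}$ with $\mathcal{H}\psi_{n-j,j}=E_{n}\psi_{n-j,j}$. Each basis coefficient then carries the factor $1+(-1)^{i-j}$, which is $2$ when $i\equiv j\operatorname{mod}2$ and $0$ otherwise, so
\[
2\sum_{i\,\equiv\, j\,(\mathrm{mod}\,2)}c_{j,i}\psi_{n-i,i}=E_{n}\psi_{n-j,j}.
\]
Linear independence of the $\psi_{n-i,i}$ then yields $c_{j,j}=\frac{1}{2}E_{n}$, which is statement $(1)$, and $c_{j,i}=0$ whenever $i\equiv j\operatorname{mod}2$ with $i\neq j$, which is statement $(2)$. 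Feeding $(1)$ and $(2)$ back into the displayed identity for $\widehat{H}_{2}\psi_{n-j,j}$ leaves only the diagonal term $\frac{1}{2}E_{n}\psi_{n-j,j}$ (sign $+1$) and the opposite-parity terms with $j-i\equiv1\operatorname{mod}2$ (sign $-1$), which is precisely statement $(3)$.

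I expect the only real obstacle to be bookkeeping rather than substance. One must check that the eigenvalue formula $\rho_{1}\psi_{n-j,j}=\mathrm{i}^{\,n-2j}\psi_{n-j,j}$ (and its conjugate for $\rho_{3}$) holds uniformly across all isotypes and all parities of $n$, as asserted in the discussion preceding the proposition, and one must track carefully that the condition ``$i-j$ odd'' matches the claimed index set ``$j-i\equiv1\operatorname{mod}2$.'' Notably, the self-adjointness of $H_{0}$ plays no role here; what does all the work is the symmetry $\widehat{H}_{2}=\rho_{3}\widehat{H}_{0}\rho_{1}$ combined with $H_{0}+H_{2}=\mathcal{H}$.
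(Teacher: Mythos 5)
Your proposal is correct and follows essentially the same route as the paper: conjugation by the quarter-turn via $\rho_{3}\widehat{H}_{0}\rho_{1}=\widehat{H}_{2}$ to insert the sign $(-1)^{i-j}$ into the coefficients, then summation against $H_{0}+H_{2}=\mathcal{H}$ and linear independence of the basis to extract all three statements. No discrepancies to report.
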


\begin{proof}
By hypothesis,
\begin{align*}
H_{2}\psi_{n-2j,j} & =\rho_{3}H_{0}\rho_{1}\psi_{n-j,j}
=\rho_{3}\mathrm{i}^{n-2j}\sum_{i=0}^{n}c_{j,i}\psi_{n-i,i}
=\mathrm{i}^{n-2j}\sum_{i=0}^{n}c_{j,i}\mathrm{i}^{-n+2i}\psi_{n-i,i}
\\
&=\sum_{i=0}^{n}(-1)^{i-j}c_{j,i}\psi_{n-i,i}.
\end{align*}
From $H_{0}+H_{2}=\mathcal{H}$, it follows that $\sum_{i=0}^{n}c_{j,i}\big(1+(-1)^{j-i}\big) \psi_{n-i,i}=E_{n}\psi_{n-j,j}$. Thus
$2c_{j,j}=E_{n}$ and $j-i=0\operatorname{mod}2$ and $j\neq i$ implies
$c_{j,i}=0$.
\end{proof}

Since $\{ \sigma_{0},\sigma_{2}\} $ and $\{ \sigma
_{1},\sigma_{3}\} $ are conjugacy classes, the operators $H_{0}%
^{m}+H_{2}^{m}$ and $H_{1}^{m}+H_{3}^{m}$ commute with $\mathcal{H}$ and with
the action of the group for $m=1,2,3,\ldots$ (Proposition~\ref{wHaw}). There
is a fairly simple formula for $\sum_{i=0}^{3}H_{i}^{2}$. Note $I$ denotes the
identity in the group.

\begin{Definition}
Set $R:=(I+\kappa_{0}(\sigma_{0}+\sigma_{2}) +\kappa
_{1}(\sigma_{1}+\sigma_{3}) )^{2}-2\big(\kappa_{0}^{2}+\kappa_{1}^{2}\big) (1-\rho_{2})$. This is an
element of the center of the group algebra, that is, $[R,\sigma
_{j}] =0$ for $0\leq j\leq3$ and $[R,\rho_{j}] =0$ for $1\leq j\leq3$.
\end{Definition}

Equivalently, $R=I+4\big(\kappa_{0}^{2}+\kappa_{1}^{2}\big) \rho
_{2}+2\kappa_{0} (\sigma_{0}+\sigma_{2} ) +2\kappa_{1} (\sigma_{1}+\sigma_{3} ) +4\kappa_{0}\kappa_{1} (\rho_{1}+\rho
_{3} )$.

\begin{Theorem}\label{H0123sq}$H_{0}^{2}+H_{2}^{2}+H_{1}^{2}+H_{3}^{2}=\frac{3}{2}%
\mathcal{H}^{2}-2\omega^{2}\mathcal{J}^{2}-2\omega^{2}R$.
\end{Theorem}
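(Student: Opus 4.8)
The plan is to reduce the statement to a single product identity and then verify that identity by a computation graded in powers of $\omega$. First I would use the Proposition $H_{0}+H_{2}=\mathcal{H}=H_{1}+H_{3}$ together with its Corollary $[H_{0},H_{2}]=[H_{1},H_{3}]=0$ to write $H_{0}^{2}+H_{2}^{2}=\mathcal{H}^{2}-2H_{0}H_{2}$ and $H_{1}^{2}+H_{3}^{2}=\mathcal{H}^{2}-2H_{1}H_{3}$, so the left-hand side equals $2\mathcal{H}^{2}-2(H_{0}H_{2}+H_{1}H_{3})$. Comparing with the asserted right-hand side, the theorem is then equivalent to
\[
H_{0}H_{2}+H_{1}H_{3}=\frac{1}{4}\mathcal{H}^{2}+\omega^{2}\mathcal{J}^{2}+\omega^{2}R .
\]

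Next I would record each $H_{j}=\omega^{2}\langle v_{j},x\rangle^{2}-\langle v_{j},\nabla_{\kappa}\rangle^{2}=:P_{j}+D_{j}$ in the complex variables, using the formulas for $\langle v_{j},x\rangle$ and $\langle v_{j},\nabla_{\kappa}\rangle$ given earlier (with $\zeta=\mathrm{i}$): for instance $P_{0}=-\frac{\omega^{2}}{4}(z-\overline{z})^{2}$, $D_{0}=(T-\overline{T})^{2}$, $P_{2}=\frac{\omega^{2}}{4}(z+\overline{z})^{2}$, $D_{2}=-(T+\overline{T})^{2}$, with the $j=1,3$ analogues obtained identically. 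Because $P_{j}$ carries $\omega^{2}$ and $D_{j}$ carries $\omega^{0}$, while $\mathcal{J}$ and $R$ are $\omega$-free, I would match both sides separately in each power of $\omega$.

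The purely potential ($\omega^{4}$) and purely kinetic ($\omega^{0}$) pieces need no Dunkl commutators, since position operators commute among themselves and Dunkl operators commute among themselves. Using $(z^{2}+\overline{z}^{2})^{2}-(z^{2}-\overline{z}^{2})^{2}=4z^{2}\overline{z}^{2}$ one finds $P_{0}P_{2}+P_{1}P_{3}=\frac{\omega^{4}}{4}(z\overline{z})^{2}$, and the analogous identity for $T,\overline{T}$ gives $D_{0}D_{2}+D_{1}D_{3}=4(T\overline{T})^{2}$; these are exactly the $\omega^{4}$ and $\omega^{0}$ parts of $\frac{1}{4}\mathcal{H}^{2}=\frac{1}{4}(\omega^{2}z\overline{z}-4T\overline{T})^{2}$. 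Hence all the content sits in the mixed ($\omega^{2}$) terms $P_{0}D_{2}+D_{0}P_{2}+P_{1}D_{3}+D_{1}P_{3}$.

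The hard part is this mixed term, and it is where $R$ is produced. Its leading symbol (dropping all commutators) I would compute to be $z^{2}T^{2}+\overline{z}^{2}\overline{T}^{2}-4z\overline{z}\,T\overline{T}$, which coincides with the symbol of $\mathcal{J}^{2}-(z\overline{z})(T\overline{T})-(T\overline{T})(z\overline{z})$, so the $\omega^{2}$ symbols already agree. The remaining noncommutative bookkeeping uses the commutators from \eqref{xDx}, which here read $[T,z]=[\overline{T},\overline{z}]=I+\kappa_{0}(\sigma_{0}+\sigma_{2})+\kappa_{1}(\sigma_{1}+\sigma_{3})=:S$ and $[T,\overline{z}]=\overline{[\overline{T},z]}=\kappa_{0}(\sigma_{2}-\sigma_{0})+\mathrm{i}\kappa_{1}(\sigma_{1}-\sigma_{3})$. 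Normal-ordering (moving all $T,\overline{T}$ to the right of $z,\overline{z}$) generates single-commutator, first-order corrections that must cancel against those coming from $\mathcal{J}^{2}$ and from $(z\overline{z})(T\overline{T})+(T\overline{T})(z\overline{z})$, and double-commutator corrections that are pure group-algebra elements. The main obstacle is collecting the latter and checking they sum to exactly $R=S^{2}-2(\kappa_{0}^{2}+\kappa_{1}^{2})(1-\rho_{2})$: the central term $S^{2}$ arises from the ``diagonal'' pair $[T,z],[\overline{T},\overline{z}]$, while the $(1-\rho_{2})$ correction arises from the ``off-diagonal'' pair $[T,\overline{z}],[\overline{T},z]$ through products $\sigma_{j}\sigma_{k}=\rho_{j-k}$ (e.g., $(\sigma_{2}-\sigma_{0})^{2}=2(1-\rho_{2})=(\sigma_{1}-\sigma_{3})^{2}$). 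This step demands careful tracking of how the reflections commute past $z,\overline{z},T,\overline{T}$, and is precisely the kind of Dunkl-operator relation the symbolic method sketched in the appendix is designed to confirm.
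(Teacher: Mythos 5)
Your algebraic reduction is correct and is organized quite differently from the paper: the paper proves this theorem by a direct computer-assisted calculation, justified by the Dunkl-kernel method of Appendix~\ref{Symb} (an operator $\mathcal{T}$ in the rational Cherednik algebra vanishes iff $\mathcal{T}K(x,y)=0$), and offers no graded or structural decomposition at all. Your steps that can be checked are right: $H_{0}^{2}+H_{2}^{2}+H_{1}^{2}+H_{3}^{2}=2\mathcal{H}^{2}-2(H_{0}H_{2}+H_{1}H_{3})$ via the Proposition and Corollary, the equivalent form $H_{0}H_{2}+H_{1}H_{3}=\frac{1}{4}\mathcal{H}^{2}+\omega^{2}\mathcal{J}^{2}+\omega^{2}R$, the complex-coordinate expressions for $P_{j}$, $D_{j}$, the evaluations $P_{0}P_{2}+P_{1}P_{3}=\frac{\omega^{4}}{4}(z\overline{z})^{2}$ and $D_{0}D_{2}+D_{1}D_{3}=4(T\overline{T})^{2}$, the commutators $[T,z]=[\overline{T},\overline{z}]=S$ and $[T,\overline{z}]=\kappa_{0}(\sigma_{2}-\sigma_{0})+\mathrm{i}\kappa_{1}(\sigma_{1}-\sigma_{3})$, and the agreement of the $\omega^{2}$ principal symbols.

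However, the proposal stops exactly where the content of the theorem begins, so there is a genuine gap. The whole point of the statement is the group-algebra element $R$, and your treatment of the mixed $\omega^{2}$ term --- the only place where $R$ and the off-diagonal structure of $\mathcal{J}^{2}$ can appear --- consists of assertions (``first-order corrections must cancel'', ``double-commutator corrections \dots{} sum to exactly $R$'') rather than a computation. Two concrete difficulties are buried there. First, the commutators produced by \eqref{xDx} are group-algebra valued and do \emph{not} commute with $z$, $\overline{z}$, $T$, $\overline{T}$ (e.g., $\sigma_{0}z=\overline{z}\sigma_{0}$, $\sigma_{0}T=\overline{T}\sigma_{0}$), so ``normal ordering'' is not iterated Heisenberg-type commutation: every correction term must itself be pushed into a canonical form (positions, then Dunkl operators, then group elements) before anything can be compared, and this reshuffling generates further terms. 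Second, your structural account of $R$ is too coarse. The off-diagonal products give, besides $\kappa_{0}^{2}(\sigma_{2}-\sigma_{0})^{2}+\kappa_{1}^{2}(\sigma_{1}-\sigma_{3})^{2}=2\big(\kappa_{0}^{2}+\kappa_{1}^{2}\big)(1-\rho_{2})$, the cross terms $\pm\mathrm{i}\kappa_{0}\kappa_{1}\big[(\sigma_{2}-\sigma_{0})(\sigma_{1}-\sigma_{3})-(\sigma_{1}-\sigma_{3})(\sigma_{2}-\sigma_{0})\big]=\pm4\mathrm{i}\kappa_{0}\kappa_{1}(\rho_{1}-\rho_{3})$, which are purely imaginary and must cancel, while $R$ itself contains the real term $4\kappa_{0}\kappa_{1}(\rho_{1}+\rho_{3})$ that has to emerge with exactly the right coefficient. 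Establishing that the imaginary $(\rho_{1}-\rho_{3})$ pieces disappear, that the $(\rho_{1}+\rho_{3})$ piece survives, and that the first-order terms match those from $\mathcal{J}^{2}$ and $z\overline{z}T\overline{T}+T\overline{T}z\overline{z}$ \emph{is} the theorem; until that bookkeeping is actually executed (by hand, or by the symbolic method of Appendix~\ref{Symb}, which is in effect what the paper does), you have a sound strategy and a correct reduction, but not a proof.
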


\begin{proof}
The details are presented in Appendix~\ref{Symb}. The idea is to use direct
(computer-assisted) calculation.
\end{proof}

The following defines the operator which is the main concern in the sequel; it
will be shown to commute with $\mathcal{H}$ and the group action but not with
angular momentum. The latter claim is proven by demonstrating that
eigenfunctions of $\mathcal{J}^{2}$ are not preserved.

\begin{Definition}
Set $\mathcal{K}:=H_{0}^{2}+H_{2}^{2}-H_{1}^{2}-H_{3}^{2}$, a fourth-order operator.
\end{Definition}

From Propositions~\ref{[hH'0} and~\ref{wHaw}, it follows that $[\mathcal{K},\mathcal{H}] =0$ and each of $H_{0}^{2}+H_{2}^{2}$, $H_{1}^{2}+H_{3}^{2}$ commutes with the group action.

\begin{Proposition}
$\mathcal{K}=2(H_{1}H_{3}-H_{0}H_{2}) =-\frac{1}{2}
\mathcal{H}^{2}+2\omega^{2}\mathcal{J}^{2}+2\mathcal{\omega}^{2}R+4\big(H_{0}-\frac{1}{2}\mathcal{H}\big)^{2}$.
\end{Proposition}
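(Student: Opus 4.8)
The plan is to derive both equalities purely algebraically from four ingredients already in hand: the identity $H_{0}+H_{2}=\mathcal{H}=H_{1}+H_{3}$ from the preceding Proposition, the commutativity relations $[H_{0},H_{2}]=0$ and $[H_{1},H_{3}]=0$ from the Corollary, the relation $[\mathcal{H},H_{0}]=0$ from Proposition~\ref{[hH'0}, and the closed form for $\sum_{i=0}^{3}H_{i}^{2}$ in Theorem~\ref{H0123sq}. No new operator computations are required; everything reduces to manipulating these relations.

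For the first equality I would square $H_{0}+H_{2}=\mathcal{H}$ and use $[H_{0},H_{2}]=0$ to get $H_{0}^{2}+H_{2}^{2}=\mathcal{H}^{2}-2H_{0}H_{2}$, and likewise $H_{1}^{2}+H_{3}^{2}=\mathcal{H}^{2}-2H_{1}H_{3}$. Subtracting these two relations gives at once $\mathcal{K}=H_{0}^{2}+H_{2}^{2}-H_{1}^{2}-H_{3}^{2}=2(H_{1}H_{3}-H_{0}H_{2})$, which is the first asserted form.

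For the second equality I would instead add the same two relations to obtain $H_{0}^{2}+H_{1}^{2}+H_{2}^{2}+H_{3}^{2}=2\mathcal{H}^{2}-2(H_{0}H_{2}+H_{1}H_{3})$. Equating this with the value $\frac{3}{2}\mathcal{H}^{2}-2\omega^{2}\mathcal{J}^{2}-2\omega^{2}R$ supplied by Theorem~\ref{H0123sq} and solving yields the key identity $H_{0}H_{2}+H_{1}H_{3}=\frac{1}{4}\mathcal{H}^{2}+\omega^{2}\mathcal{J}^{2}+\omega^{2}R$. Writing $\mathcal{K}=2(H_{1}H_{3}-H_{0}H_{2})=2(H_{0}H_{2}+H_{1}H_{3})-4H_{0}H_{2}$, I would substitute this identity for the sum and then use $H_{2}=\mathcal{H}-H_{0}$ together with $[\mathcal{H},H_{0}]=0$ to replace $H_{0}H_{2}=H_{0}\mathcal{H}-H_{0}^{2}$, producing $\mathcal{K}=\frac{1}{2}\mathcal{H}^{2}+2\omega^{2}\mathcal{J}^{2}+2\omega^{2}R-4H_{0}\mathcal{H}+4H_{0}^{2}$.

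Finally I would complete the square in $H_{0}$: since $4\big(H_{0}-\frac{1}{2}\mathcal{H}\big)^{2}=4H_{0}^{2}-4H_{0}\mathcal{H}+\mathcal{H}^{2}$ (again using $[\mathcal{H},H_{0}]=0$), the terms $-4H_{0}\mathcal{H}+4H_{0}^{2}+\frac{1}{2}\mathcal{H}^{2}$ equal $4\big(H_{0}-\frac{1}{2}\mathcal{H}\big)^{2}-\frac{1}{2}\mathcal{H}^{2}$, which delivers the stated form $-\frac{1}{2}\mathcal{H}^{2}+2\omega^{2}\mathcal{J}^{2}+2\omega^{2}R+4\big(H_{0}-\frac{1}{2}\mathcal{H}\big)^{2}$. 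The only point needing care — and the closest thing to an obstacle — is tracking which operators commute, so that the binomial expansions and the completion of the square are legitimate; every reordering invoked is justified by the cited commutativity relations, and no genuine difficulty arises beyond this bookkeeping.
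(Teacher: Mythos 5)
Your proposal is correct and follows essentially the same route as the paper: both arguments combine $H_{0}+H_{2}=\mathcal{H}=H_{1}+H_{3}$ with the commutativity relations to expand the squares, invoke Theorem~\ref{H0123sq} to eliminate $\sum_{i=0}^{3}H_{i}^{2}$, and finish by rewriting $H_{0}H_{2}$ as $\frac{1}{4}\mathcal{H}^{2}-\big(H_{0}-\frac{1}{2}\mathcal{H}\big)^{2}$. The only cosmetic difference is that you isolate the intermediate identity $H_{0}H_{2}+H_{1}H_{3}=\frac{1}{4}\mathcal{H}^{2}+\omega^{2}\mathcal{J}^{2}+\omega^{2}R$ and complete the square at the end, whereas the paper works directly from $\mathcal{K}+\sum_{i}H_{i}^{2}=2\mathcal{H}^{2}-4H_{0}H_{2}$ and uses the factored difference-of-squares form at once.
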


\begin{proof}
$\mathcal{K}=(H_{0}+H_{2})^{2}-2H_{0}H_{2}-(H_{1}%
+H_{3})^{2}+2H_{1}H_{3}$. Also $\mathcal{K}+\sum_{i=0}^{3}H_{i}%
^{2}=2\big(H_{0}^{2}+H_{2}^{2}\big) =2\mathcal{H}^{2}-4H_{0}H_{2}$. From
Proposition~\ref{H0H2}, $H_{2}-\frac{1}{2}\mathcal{H=-}(H_{0}-\frac
{1}{2}\mathcal{H}) $ and $H_{2}H_{0}=\frac{1}{4}\mathcal{H}^{2}-\big(H_{0}-\frac{1}{2}\mathcal{H}\big)^{2}$. Thus%
\begin{align*}
\mathcal{K} & =2\mathcal{H}^{2}-4H_{0}H_{2}-\bigg(\frac{3}{2}\mathcal{H}^{2}-2\omega^{2}\mathcal{J}^{2}-2\omega^{2}R\bigg)
\\
& =\frac{1}{2}\mathcal{H}^{2}+2\omega^{2}\mathcal{J}^{2}+2\omega
^{2}R-4\bigg\{ \frac{1}{4}\mathcal{H}^{2}-\bigg(H_{0}-\frac{1}{2}\mathcal{H}\bigg)^{2}\bigg\};
\end{align*}
this completes the proof.
\end{proof}

\section[The expansion coefficients of H\_0]{The expansion coefficients of $\boldsymbol{H_{0}}$}\label{ExpH0}

\subsection{General formulas}

This section calculates the coefficients in $H_{0}\psi_{n-j,j}$. Start with%
\begin{align*}
\begin{split}
\widehat{H}_{0} & =\big(T-\overline{T}\big)^{2}+\frac{\omega}%
{2}\big\{ (z-\overline{z}) \big(T-\overline{T}\big)
+\big(T-\overline{T}\big) (z-\overline{z}) \big\}
\\
& =\big(T-\overline{T}\big)^{2}+\omega\big\{ (z-\overline
{z}) \big(T-\overline{T}\big) +1+2\kappa_{0}\sigma_{0}+\kappa
_{1}(\sigma_{1}+\sigma_{3}) \big\}.
\end{split}
\end{align*}
Let%
\begin{gather*}
A :=(z-\overline{z}) (\partial_{z}-\partial_{\overline{z}}) +1+2\kappa_{0},
\\
B :=\kappa_{1}\bigg\{ (\sigma_{1}+\sigma_{3}) +(z-\overline{z}) \bigg\{ \frac{(1+\mathrm{i})
}{z-\mathrm{i}\overline{z}}(1-\sigma_{1}) +\frac{(1-\mathrm{i})}{z+\mathrm{i}\overline{z}}(1-\sigma_{3})\bigg\} \bigg\}
\end{gather*}
so that $\widehat{H}_{0}=(T-\overline{T})^{2}+\omega(A+B)$. The part of $(z-\overline{z}) (T-\overline{T})$ corresponding to $\sigma_{0}$ is $2\kappa_{0}({z-\overline{z}})\times \frac{1-\sigma_{0}}{(z-\overline{z})}=2\kappa_{0}(1-\sigma_{0})$. To determine the coefficients in
$\widehat{H}_{0}\psi_{n-j,j}=\sum_{i=0}^{n}c_{j,i}\psi_{n-i,i}$, it
suffices to consider the monomials of degree $n$ in $\widehat{H}_{0}\psi_{n-j,j}$, that is, analyze $(A+B) \psi_{j,n-j}$. For a polynomial $p=\sum_{i=k}^{\ell}c_{i}z^{n-i}\overline{z}^{i}$ with $c_{k}\neq0\neq c_{\ell}$ let $D(p) =c_{k}z^{n-k}\overline{z}^{k}+c_{\ell}z^{n-\ell}\overline{z}^{\ell}$ ($D$ for dominant terms). Then
(for $2j\leq n$)
\begin{gather*}
D\big(Az^{n-j}\overline{z}^{j}\big) =-jz^{n-j+1}\overline{z}%
^{j-1}-(n-j) z^{n-j-1}\overline{z}^{j+1},\\
D\big(Az^{j}\overline{z}^{n-j}\big) =-jz^{j-1}\overline{z}%
^{n-j+1}-(n-j) z^{j+1}\overline{z}^{n-j-1}.
\end{gather*}
Let $2j\leq n$ and $m:=n-2j$, then%
\begin{gather*}
Bz^{n-j}\overline{z}^{j} = \kappa_{1}(z\overline{z})
^{j}\mathrm{i}^{m}(1+(-1)^{m}) \overline{z}^{m}\\
\hphantom{Bz^{n-j}\overline{z}^{j}=}{}+\kappa_{1}(z\overline{z})^{j}(z-\overline{z})
\sum_{k=0}^{m-1}z^{m-1-k}\overline{z}^{k}\big\{ \mathrm{i}^{k}(1+\mathrm{i}) +( -\mathrm{i})^{k}(1-\mathrm{i}%
) \big\} \\
\hphantom{Bz^{n-j}\overline{z}^{j}}{}= \kappa_{1}(z\overline{z})^{j}\bigg\{ 2z^{m}+2\varepsilon
_{m-1}\overline{z}^{m}+2\sum_{k=1}^{m-1}z^{m-k}\overline{z}^{k}(\varepsilon_{k}-\varepsilon_{k-1}) \bigg\},
\end{gather*}
where $\varepsilon_{k}=(-1)^{\lfloor (k+1)/2\rfloor}$ and $\lfloor r\rfloor $ is the largest integer
$\leq r$. Thus
\[
D\big(Bz^{n-j}\overline{z}^{j}\big) =2\kappa_{1}\big(z^{n-j}\overline{z}^{j}+ \varepsilon_{n-2j-1}z^{j}\overline{z}^{n-j}\big);
\]
the same formula holds with $(z,\overline{z}) $ replaced by
$(\overline{z},z) $ (because $[ B,\sigma_{0}] =0$).
The special case $Bz^{j}\overline{z}^{j}=2\kappa_{1}z^{j}\overline{z}^{j}$.

Let $1\leq j\leq n$, then by construction $D(\psi_{2n-j,j})
=\mathcal{C}\big(\psi_{2n-j,j},z^{2n-j}\overline{z}^{j}\big)
 \big(z^{2n-j}\overline{z}^{j}+z^{j}\overline{z}^{2n-j}\big)
$, similarly for $\psi_{j,2n-j}$; and from the above formulas, it follows that
\begin{align}
&D((A+B) \psi_{2n-j,j}) =-j\mathcal{C}(\psi_{2n-j,j},z^{2n-j}\overline{z}^{j}) (z^{2n-j+1}\overline
{z}^{j-1}+z^{j-1}\overline{z}^{2n-j+1}), \label{DAB1}\\
&D((A+B) \psi_{j,2n-j}) =-j\mathcal{C}(\psi_{j,2n-j},z^{2n-j}\overline{z}^{j}) (z^{2n-j+1}\overline
{z}^{j-1}-z^{j-1}\overline{z}^{2n-j+1}), \label{DAB2}%
\end{align}
the other terms are dominated by these. This implies that
$\mathcal{C}(H_{0}\psi_{2n-j,j},\psi_{2n-i,i}) =0$ and
$\mathcal{C}(H_{0}\psi_{j,2n-j},\psi_{i,2n-i,}) =0$ for $i<j-1$.
Thus the nonzero coefficients occur only for $\vert j-i\vert \leq1$.

For the odd case, suppose $j\leq n$, then
\begin{gather*}
D(\psi_{2n+1-j,j}) =\mathcal{C}\big(\psi_{2n+1-j,j},z^{2n+1-j}\overline{z}^{j}\big) z^{2n+1-j}\overline{z}^{j}
 +\mathcal{C}\big(\psi_{2n+1-j,j},z^{j+1}\overline{z}^{2n-j}\big)
z^{j+1}\overline{z}^{2n-j}.
\end{gather*}
This implies%
\begin{gather}
D((A+B) \psi_{2n+1-j,j}) = \bigl(-jz^{2n+2-j}\overline{z}^{j-1}+2\varepsilon_{2n+2j}\kappa_{1}z^{j}\overline
{z}^{2n+1-j}\bigr)\mathcal{C}\big(\psi_{2n+1-j,j},z^{2n+1-j}\overline{z}^{j}\big)\nonumber
\\
\hphantom{D((A+B) \psi_{2n+1-j,j}) =}{}
 -(j+1) \mathcal{C}\big(\psi_{2n+1-j,j},z^{j+1}\overline
{z}^{2n-j}\big) z^{j}\overline{z}^{2n+1-j}.\label{oddD}
\end{gather}

\subsection{Even degree}

To find $\mathcal{C}\big(\widehat{H}_{0}\psi_{2n-j,j},\psi_{2n-j+1,j-1}\big) $ and $\mathcal{C}%
\big(\widehat{H}_{0}\psi_{j,2n-j},\psi_{j-1,2n-j+1}\big)$, we used formulas \eqref{DAB1} and \eqref{DAB2}.
Then for $\mathcal{C}\big(\widehat{H}_{0}\psi_{2n-j,j},\psi_{2n-j-1,j+1}\big)$ and $\mathcal{C}\big(\widehat{H}_{0}\psi_{j,2n-j},\psi_{j+1,2n-j-1}\big)$, we used~\eqref{CAij}. The leading coefficients of $\psi_{n-j,j}$ are derived from%
\begin{align*}
 &\mathcal{C}\big(R_{n}^{(\alpha,\beta)},z^{4n}\big)
=\mathcal{C}\big(R_{n}^{(\alpha,\beta)},\overline{z}^{4n}\big) =\frac{1}{2^{2n}n!}(n+\alpha+\beta+1)_{n},
\\
&\mathcal{C}\big(L_{j}^{(n+\gamma_{\kappa})}(\omega
z\overline{z}),z^{j}\overline{z}^{j}\big) =(-1)^{j}\frac{\omega^{j}}{j!}.
\end{align*}
Then%
\begin{align*}
 &\mathcal{C}\big(p_{4n,00},z^{4n}\big) =\mathcal{C}\big(p_{4n,00},\overline{z}^{4n}\big) =\frac{1}{2^{2n}n!}(n+\kappa
_{0}+\kappa_{1})_{n},
\\
&\mathcal{C}\big(p_{4n,11},z^{4n}\big) =-\mathcal{C}\big(p_{4n,11},\overline{z}^{4n}\big) =\frac{1}{2^{2n-2}(n-1)
!}(n+\kappa_{0}+\kappa_{1}+1)_{n-1}%
\end{align*}
and%
\begin{align*}
&\mathcal{C}\big(p_{4n,+2,10},z^{4n+2}\big) =\mathcal{C}\big(p_{4n+2,10},\overline{z}^{4n+2}\big) =\frac{1}{2^{2n}n!}(n+\kappa_{0}+\kappa_{1}+1)_{n},
\\
&\mathcal{C}\big(p_{4n+2,01},z^{4n+2}\big) =-\mathcal{C}\big(p_{4n+2,01},\overline{z}^{4n+2}\big) =\frac{1}{2^{2n}n!}(n+\kappa_{0}+\kappa_{1}+1)_{n}.
\end{align*}
First consider the even degree polynomials satisfying $\sigma_{0}p=p$%
\begin{gather}
\left(H_{0}-\frac{1}{2}E_{4n+2j}\right) \psi_{4n+j,j}=\omega^{2}%
\frac{n+\kappa_{0}+\kappa_{1}}{2n+\kappa_{0}+\kappa_{1}}\psi_{4n+j+1,j-1}%
\nonumber
\\ \qquad
{}+\frac{(j+1) (2n+2\kappa_{0}-1) (4n+2\kappa_{0}+2\kappa_{1}+j)}{2(2n+\kappa_{0}+\kappa
_{1})}\psi_{4n+j-1,j+1},\label{cof4n}
\\
\left(H_{0}-\frac{1}{2}E_{4n+2j+2}\right) \psi_{4n+2+j,j}=4\omega^{2}%
\frac{n+1}{2n+\kappa_{0}+\kappa_{1}+1}\psi_{4n+j+3,j-1}\nonumber
\\ \qquad
{}+2\frac{(j+1) (2n+2\kappa_{1}+1) (4n+2\kappa_{0}+2\kappa_{1}+j+2)}{2n+\kappa_{0}+\kappa_{1}+1}%
\psi_{4n+1+j,j+1},\label{cof4n2}
\end{gather}
then the even degree polynomials satisfying $\sigma_{0}p=-p$,%
\begin{gather}
\left(H_{0}-\frac{1}{2}E_{4n+2j}\right) \psi_{j,4n+j}=4\omega^{2}\frac
{n}{2n+\kappa_{0}+\kappa_{1}}\psi_{j-1,4n+j+1}\nonumber
\\ \qquad
{}+\frac{(j+1) (2n+2\kappa_{1}-1) (4n+2\kappa_{0}+2\kappa_{1}+j)}{( 2n+\kappa_{0}+\kappa
_{1})}\psi_{j+1,4n+j-1},\label{cof4nR}
\\
\left(H_{0}-\frac{1}{2}E_{4n+2j+2}\right) \psi_{j,4n+2+j}=\omega^{2}%
\frac{n+\kappa_{0}+\kappa_{1}+1}{2n+\kappa_{0}+\kappa_{1}+1}\psi
_{j-1,4n+j+3}\nonumber
\\ \qquad
{}+\frac{(j+1) (2n+2\kappa_{0}+1) (4n+2\kappa_{0}+2\kappa_{1}+j+2)}{2(2n+\kappa_{0}+\kappa
_{1}+1)}\psi_{j+1,4n+1+j}.\label{cof4n2R}
\end{gather}
Thus the matrix of $H_{0}$ in the bases $\{\psi_{2n-j,j}\colon 0\leq j\leq
n\}$ and $\{\psi_{j,2n-j}\colon 0\leq j<n\} $ is tridiagonal.

\subsection{Odd degree}

Formula \eqref{oddD} is used to find $\mathcal{C}\big(\widehat{H}_{0}
\psi_{2n+1-j,j},\psi_{2n+2-j,j-1}\big)$ and $\mathcal{C}\big(\widehat{H}_{0}\psi_{2n+1-j,j},\psi_{j,2n+1-j}\big) $. We will show that
the nonzero coefficients in $H_{0}\psi_{2n+1-j,j}=\sum_{i=0}^{2n+1}c_{j,i}
\psi_{2n+1-i,i}$ occur at $i=j-1,\allowbreak j,j+1,2n+1-j$. Suppose $m$ is odd, then
$\psi_{m-j,j}=\sum_{i=j}^{m-j-1}a_{i}z^{m-i}\overline{z}^{i}$ and
$a_{j},a_{m-j-1}$ are involved in finding $\mathcal{C}\big(H_{0}
\psi_{m-j,j},z^{m-j+1}\overline{z}^{j-1}\big)$ and
$\mathcal{C}\big(H_{0}\psi_{m-j,j},z^{j}\overline{z}^{m-j}\big)$. Then
\begin{gather*}
\mathcal{C}(H_{0}\psi_{m+j,j},\psi_{m+1+j,j-1}) =\frac
{\mathcal{C}\big(H_{0}\psi_{m+j,j},z^{m+1+j}\overline{z}^{j-1}\big)
}{\mathcal{C}\big(\psi_{m+1+j,j-1},z^{m+1+j}\overline{z}^{j-1}\big)}
\\ \hphantom{\mathcal{C}(H_{0}\psi_{m+j,j},\psi_{m+1+j,j-1})}
{}=-\frac{j\omega\mathcal{C}\big(\psi_{m+j,j},z^{m+j}\overline{z}^{j}\big)
}{\mathcal{C}\big(\psi_{m+1+j,j-1},z^{m+1+j}\overline{z}^{j-1}\big)},
\\
\mathcal{C}\big(H_{0}\psi_{m+j,j},z^{j}\overline{z}^{m+j}\big)
=\mathcal{C}\big(H_{0}\psi_{m+j,j},\psi_{m+1+j,j-1}\big) \mathcal{C}%
\big(\psi_{m+1+j,j-1},z^{j}\overline{z}^{m+j}\big)
\\ \hphantom{\mathcal{C}\big(H_{0}\psi_{m+j,j},z^{j}\overline{z}^{m+j}\big)=}
{}+\mathcal{C}(H_{0}\psi_{m+j,j},\psi_{j,m+j}) \mathcal{C}\big(\psi_{j,m+j},z^{j}\overline{z}^{m+j}\big)
\\ \hphantom{\mathcal{C}\big(H_{0}\psi_{m+j,j},z^{j}\overline{z}^{m+j}\big)}
{}=-\omega(j+1) \mathcal{C}\big(\psi_{m+j,j},z^{j+1}\overline{z}^{m-1+j}\big)
\\ \hphantom{\mathcal{C}\big(H_{0}\psi_{m+j,j},z^{j}\overline{z}^{m+j}\big)=}
{}+2(-1)^{(m+1)
/2}\omega\kappa_{1}\mathcal{C}\big(\psi_{m+j,j},z^{m+j}\overline{z}%
^{j}\big).
\end{gather*}
The lower three lines are used to solve for $\mathcal{C} (H_{0}%
\psi_{m+j,j},\psi_{j,m+j} ) $. There are two cases: $m-2j=4n+1,4n+3$.
First%
\begin{gather*}
\mathcal{C}\big(p_{4n+1},z^{4n+1}\big) =\mathcal{C}\big(p_{4n,00},z^{4n}\big)
+\frac{1}{4}\mathcal{C}\big(p_{4n,11},z^{4n}\big) =\frac{1}{2^{2n}n!}(n+\kappa_{0}+\kappa_{1}+1)_{n},
\\
\mathcal{C}\big(p_{4n+1},z\overline{z}^{4n}\big) =\mathcal{C}\big(p_{4n,00},z^{4n}\big) -\frac{1}{4}\mathcal{C}\big(p_{4n,11},z^{4n}\big)
=\frac{\kappa_{0}+\kappa_{1}}{2^{2n}n!}(n+\kappa_{0}+\kappa_{1}+1)_{n-1}
\end{gather*}
and second
\begin{gather*}
\mathcal{C}\big(p_{4n+3},z^{4n+3}\big) =\bigg(n+\kappa_{0}+\frac{1}{2}\bigg) \mathcal{C}\big(p_{4n+2,10},z^{4n+2}\big) +\bigg(n+\kappa_{1}+\frac{1}{2}\bigg)
\mathcal{C}\big(p_{4n+2,01},z^{4n+2}\big)
\\ \hphantom{\mathcal{C}\big(p_{4n+3},z^{4n+3}\big)}
{} =\frac{1}{2^{2n}n!}(n+\kappa_{0}+\kappa_{1}+1)_{n+1},
\\
\mathcal{C}\big(p_{4n+3},z\overline{z}^{4n+2}\big) =\bigg(n+\kappa_{0}+\frac{1}{2}\bigg) \mathcal{C}\big(p_{4n+2,10},z^{4n+2}\big) -\bigg(n+\kappa_{1}+\frac{1}{2}\bigg)
\mathcal{C}\big(p_{4n+2,01},z^{4n+2}\big)
\\ \hphantom{\mathcal{C}\big(p_{4n+3},z\overline{z}^{4n+2}\big) }
 {}=\frac{\kappa_{0}-\kappa_{1}}{2^{2n}n!}(n+\kappa_{0}+\kappa_{1}+1)_{n}.
\end{gather*}
Then%
\begin{gather}
\bigg(H_{0}-\frac{1}{2}E_{4n+1+2j}\bigg) \psi_{4n+1+j,j}\nonumber
\\ \qquad
{}=\frac{\omega^{2}}{2n+\kappa_{0}+\kappa_{1}+1}\psi_{4n+2+j,j-1}
+\frac{(j+1) (4n+2\kappa_{0}+2\kappa_{1}+j+1)
}{2n+\kappa_{0}+\kappa_{1}}\psi_{4n+j,j+1}\nonumber
\\ \qquad
\phantom{=}{}+\omega\bigg\{ \frac{j(2n+2\kappa_{0}+1)}{2n+\kappa_{0}%
+\kappa_{1}+1}+\frac{2n(j+1)}{2n+\kappa_{0}+\kappa_{1}}-(2j+2\kappa_{1}+1)\bigg \} \psi_{j,4n+1+j},\label{cof4n1}
\\[1mm]
\bigg(H_{0}-\frac{1}{2}E_{4n+3+2j}\bigg) \psi_{4n+3+j,j}\nonumber
\\ \qquad
{}=\frac{4\omega^{2}(n+1) (n+\kappa_{0}+\kappa_{1}+1)}%
{2n+\kappa_{0}+\kappa_{1}+2}\psi_{4n+4+j,j-1}\nonumber
\\ \qquad
\phantom{=}{}+\frac{(j+1) (4n+2\kappa_{0}+2\kappa_{1}+j+3)
(2n+2\kappa_{0}+1) (2n+2\kappa_{1}+1)}%
{2n+\kappa_{0}+\kappa_{1}+1}\psi_{4n+2+j,j+1}\nonumber
\\ \qquad
\phantom{=}{}+\omega\bigg\{ \frac{(j+1) (2n+2\kappa_{0}+1)
}{2n+\kappa_{0}+\kappa_{1}+1}-\frac{2j(n+1)}{2n+\kappa
_{0}+\kappa_{1}+2}+(2j+2\kappa_{1}+1) \bigg\} \psi
_{j,4n+3+j}.\label{cof4n3}
\end{gather}

\section[The expansion coefficients of K]{The expansion coefficients of $\boldsymbol{\mathcal{K}}$}\label{ExpK}

\subsection{Even degree}

The matrices of $H_{0}-\frac{1}{2}\mathcal{H}$ with respect to the bases
$\{ \psi_{2n-j,j}\colon 0\leq j\leq n\}$ ($\sigma_{0}p=p$) and
$\{ \psi_{j,2n-j}\colon 0\leq j<n\}$ ($\sigma_{0}p=-p$) are
tridiagonal with zeroes on the main diagonal. If $M$ is such a matrix, then the
only nonzero elements in $M^{2}$ are $\big(M^{2}\big)_{i,i-2}%
=M_{i,i-1}M_{i-1,i-2}$, $\big(M^{2}\big)_{i,i+2}=M_{i,i+1}M_{i+1,i+2}$ and
$\big(M^{2}\big)_{i,i}=M_{i,i-1}M_{i-1,i}+M_{i,i+1}M_{i+1,i}$. By use
of the expansion coefficients of $H_{0}-\frac{1}{2}\mathcal{H}$, we find the
matrix for $\mathcal{K}=-\frac{1}{2}\mathcal{H}^{2}+2\omega^{2}\mathcal{J}%
^{2}+2\mathcal{\omega}^{2}R+4\big(H_{0}-\frac{1}{2}\mathcal{H}\big)
^{2}$ (the first three terms act as scalars).

To compute $\mathcal{K}$ $\psi_{4n+j,j}$, we use the coefficients of the
expansions of$H_{0}\psi_{4n+j+k,j-k}$ with $k=-1,0,1$ from
\eqref{cof4n}, \eqref{cof4n2}. Note $4n+j-1=(4(n-1)
+2) +(j+1) $ and $4n+j+1=(4n+2) +(j-1) $; this indicates which types are involved. The scalars derive
from $E_{4n+2j}$, $R\psi_{4n+j,j}=(1+2\kappa_{0}+2\kappa_{1})
^{2}\psi_{4n+j,j}$ and $\mathcal{J}^{2}\psi_{4n+j,j}=16n(n+\kappa
_{0}+\kappa_{1}) \psi_{4n+j,j}$ (from \eqref{J4n})
\begin{gather*}
\mathcal{K}\psi_{4n+j,j} =A_{-1}^{0}(n) \psi_{4n+j+2,j-2}%
+A_{0}^{0}(n,j) \psi_{4n+j,j}+A_{1}^{0}(n,j)\psi_{4n+j-2,j+2},
\\
A_{-1}^{0}(n) =16\omega^{4}\frac{(n+1) (n+\kappa_{0}+\kappa_{1})}{(2n+\kappa_{0}+\kappa_{1})
(2n+\kappa_{0}+\kappa_{1}+1)},
\\
A_{1}^{0}(n,j) =4\frac{(j+1) (j+2) (2n+2\kappa_{0}-1) (2n+2\kappa_{1}-1)
}{(2n+\kappa_{0}+\kappa_{1}-1) (2n+\kappa_{0}+\kappa_{1})}
\\ \hphantom{A_{1}^{0}(n,j) =}
{} \times(4n+2\kappa_{0}+2\kappa_{1}+j-1) (4n+2\kappa_{0}+2\kappa_{1}+j),
\\
A_{0}^{0}(n,j) =-8\omega^{2}(\kappa_{0}-\kappa_{1}) \bigg\{ 2j+\frac{(n+1) j(j-1)
}{2n+\kappa_{0}+\kappa_{1}+1}-\frac{n(j+1) (j+2)}{2n+\kappa_{0}+\kappa_{1}-1}\bigg\}.
\end{gather*}
For $\mathcal{K}\psi_{4n+2+j,j}$, we use $4(n+1) +(j-1) $ and $4n+(j+1) $ for the adjacent labels, and
$E_{4n+2+2j}$, $R\psi_{4n+2+j}=(1+2\kappa_{0}-2\kappa_{1})^{2}%
\psi_{4n+2+j,j}$ and $\mathcal{J}^{2}\psi_{4n+2+j,j}=4(2n+2\kappa
_{0}+1) (2n+2\kappa_{1}+1) \psi_{4n+2+j,j}$ (from
\eqref{J4n2})%
\begin{gather*}
\mathcal{K}\psi_{4n+2+j,j} =A_{-1}^{1}(n) \psi
_{4n+4+j,j-2}+A_{0}^{1}(n,j) \psi_{4n+2+j,j}+A_{1}^{1}(n,j) \psi_{4n+j,j+2},
\\
A_{-1}^{1}(n) =16\omega^{4}\frac{(n+1) (n+\kappa_{0}+\kappa_{1}+1)}{(2n+\kappa_{0}+\kappa_{1}+2)
(2n+\kappa_{0}+\kappa_{1}+1)},
\\
A_{1}^{1}(n,j) =4\frac{(j+1) (j+2) (2n+2\kappa_{0}-1) (2n+2\kappa_{1}+1)
}{(2n+\kappa_{0}+\kappa_{1}+1) (2n+\kappa_{0}+\kappa_{1})},
\\ \hphantom{A_{1}^{1}(n,j) =}
{}\times(4n+2\kappa_{0}+2\kappa_{1}+j+2) (4n+2\kappa_{0}+2\kappa_{1}+j+1)
\\
A_{0}^{1}(n,j) =8\omega^{2}(\kappa_{0}\!+\kappa
_{1}) \!-8\omega^{2}(\kappa_{0}\!-\kappa_{1}\!-1)
\bigg (2j\!+1\!+\frac{(n\!+1) j(j\!-1)
}{2n\!+\kappa_{0}\!+\kappa_{1}\!+2}\!-\frac{n(j\!+1) (j\!+2)
}{2n\!+\kappa_{0}\!+\kappa_{1}}\bigg).
\end{gather*}
For $\mathcal{K}\psi_{j,4n+j}$, use the coefficients from \eqref{cof4nR},
\eqref{cof4n2R}, the reversed labels from $\psi_{4n+j,j}$ and from~\eqref{J4nR}
\[
R\psi_{j,4n+j}=(1-2\kappa_{0}-2\kappa_{1})^{2}\psi
_{j,4n+j},\mathcal{J}^{2}\psi_{j,4n+j}=16n(n+\kappa_{0}+\kappa
_{1}) \psi_{j,4n+j},
\]
then
\begin{gather*}
\mathcal{K}\psi_{j,4n+j} =B_{-1}^{0}(n) \psi_{j-2,4n+j+2}%
+B_{0}^{0}(n,j) \psi_{j,4n+j}+B_{1}^{0}(n,j)\psi_{j+2,4n+j-2},
\\
B_{-1}^{0}(n) =16\omega^{4}\frac{n(n+\kappa
_{0}+\kappa_{1}+1)}{(2n+\kappa_{0}+\kappa_{1}) (2n+\kappa_{0}+\kappa_{1}+1)},
\\
B_{1}^{0}(n,j) =4\frac{(j+1) (j+2) (2n+2\kappa_{0}-1) (2n+2\kappa_{1}-1)
}{(2n+\kappa_{0}+\kappa_{1}-1) (2n+\kappa_{0}+\kappa_{1})}
\\ \hphantom{B_{1}^{0}(n,j) =}
{} \times(4n+2\kappa_{0}+2\kappa_{1}+j-1) (4n+2\kappa_{0}+2\kappa_{1}+j),
\\
B_{0}^{0}(n,j) =-8\omega^{2}(\kappa_{0}-\kappa_{1}) \bigg\{ 2j+2+\frac{nj(j-1)}{2n+\kappa_{0}%
+\kappa_{1}+1}-\frac{(n-1) (j+1) (j+2)}{2n+\kappa_{0}+\kappa_{1}-1}\bigg\}.
\end{gather*}
For $\mathcal{K}\psi_{j,4n+2+j}$, use the reversed labels from $\psi
_{4n+2+j,j}$ and $R\psi_{j,4n++2j}=(1-2\kappa_{0}+2\kappa_{1})
^{2}\psi_{j,4n+j}$, $\mathcal{J}^{2}\psi_{j,4n+2+j}=4(2n+2\kappa
_{0}+1) (2n+2\kappa_{1}+1) \psi_{j,4n+2+j}$ \eqref{J4n2R}%
\begin{gather*}
\mathcal{K}\psi_{j,4n+2+j} =B_{-1}^{1}(n) \psi
_{j-1,4n+3+j}+B_{0}^{1}(n,j) \psi_{j,4n+2+j}+B_{1}^{1}(n,j) \psi_{j+1,4n+j+1},
\\
B_{-1}^{1}(n) =16\omega^{4}\frac{(n+1) (n+\kappa_{0}+\kappa_{1}+1)}{(2n+\kappa_{0}+\kappa_{1}+2)
(2n+\kappa_{0}+\kappa_{1}+1)},
\\
B_{1}^{1}(n,j) =4\frac{(j+1) (j+2) (2n+2\kappa_{0}+1) (2n+2\kappa_{1}-1)
}{(2n+\kappa_{0}+\kappa_{1}+1) (2n+\kappa_{0}+\kappa_{1})}
\\ \hphantom{B_{1}^{1}(n,j) =}
{} \times(4n+2\kappa_{0}+2\kappa_{1}+j+1) (4n+2\kappa_{0}+2\kappa_{1}+j+2),
\\
B_{0}^{1}(n,j) =-8\omega^{2}(\kappa_{0}+\kappa
_{1}) -8\omega^{1}(\kappa_{0}-\kappa_{1}+1)
\\ \hphantom{B_{0}^{1}(n,j) =}
{} \times\bigg(2j+1+\frac{(n+1) j(j-1)
}{2n+\kappa_{0}+\kappa_{1}+2}-\frac{n(j+1) (j+2)
}{2n+\kappa_{0}+\kappa_{1}}\bigg).
\end{gather*}
That concludes the even degree case. One might notice that all the
coefficients are bounded in $(n-j) $, except for the types
$\psi_{2n-j,j}\rightarrow\psi_{2n-j-2,j+2}$, $j\leq n-2$, and $\psi
_{j,2n-j}\rightarrow\psi_{j+2,2n-j-2}$, $j\leq n-3$, which are $O\big((n-j)^{2}\big) $.

\subsection{Odd degree}

The matrix $M$ of $H_{0}-\frac{1}{2}\mathcal{H}$ with respect to the basis
$\{ \psi_{2n+1-j,j}\colon 0\leq j\leq2n+1\} $ has nonzero entries on
the subdiagonal $\{ (i+1,i) \} $, the superdiagonal
$\{ (i,i+1) \} $ (with $0\leq i\leq2n$) and the
cross diagonal $\{ (i,2n+1-i) \colon 0\leq i\leq2n+1\}$.
Because $\big[ H_{0}-\frac{1}{2}\mathcal{H},\sigma_{0}\big] =0$, there is
a symmetry property $M_{i,j}=M_{2n+1-i,2n+1-j}$. Then%
\begin{gather*}
\big(M^{2}\big)_{i,i+2} =M_{i,i+1}M_{i+1,i+2}, \qquad \big(M^{2}\big)_{i,i-2}=M_{i,i-1}M_{i-1,i-2},
\\
\big(M^{2}\big)_{i,i} =M_{i,i+1}M_{i+1,i}+M_{i,i-1}M_{i-1,i}+M_{i,2n+1-i}^{2},
\\
\big(M^{2}\big)_{i,2n-i} =M_{i,i+1}(M_{i+1,2n-i}+M_{i,2n+1-i}),
\\
\big(M^{2}\big)_{i,2n+2-i} =M_{i,i-1}(M_{i-1,2n+2-i}+M_{i,2n+1-i})
\end{gather*}
(because $M_{2n+1-i,2n-i}=M_{i,i+1}$ and $M_{2n+1-i,2n+2-i}=M_{i,i-1}$). The
two cases for $\psi_{2n+1-j,j}$ are $2n+1-2j=1,3\operatorname{mod}4$. The
adjacent ($j\pm1$) polynomials to $\psi_{4n+1+j,j}$ are $\psi_{4n+3+(j-1),j-1}%
$ and $\psi_{4(n-1) +3+(j+1),j+1}$. By use of
\eqref{cof4n1}, \eqref{cof4n3} and Section~\ref{AngMo}, $E_{4n+1+2j}%
,\mathcal{J}^{2}\psi_{4n+1+j,j}=(4n+1+2\kappa_{0}+2\kappa_{1})
^{2}$, $R\psi_{4n+1+j,j}=(1-4\kappa_{0}^{2}-4\kappa_{1}^{2})
\psi_{4n+1+j}$, we obtain
\begin{gather*}
\mathcal{K}\psi_{4n+1+j,j} =A_{-2}(n) \psi_{4n+3+j,j-2}%
+A_{0}(n,j) \psi_{4n+1+j,j}+A_{2}(n,j)\psi_{4n-1+j,j+2}
\\ \hphantom{\mathcal{K}\psi_{4n+1+j,j} =}
{} +A_{1}(n,j) \psi_{j+1,4n+j}+A_{-1}(n,j)\psi_{j-1,4n+2+j},
\\
A_{-2}(n) =16\omega^{4}\frac{(n+1) (n+\kappa_{0}+\kappa_{1}+1)}{(2n+\kappa_{0}+\kappa_{1}+2)
(2n+\kappa_{0}+\kappa_{1}+1)},
\\
A_{2}(n,j) =4\frac{(j+1) (j+2)
(2n+2\kappa_{0}-1) (2n+2\kappa_{1}-1)}{(2n+\kappa_{0}+\kappa_{1}-1) (2n+\kappa_{0}+\kappa_{1})
}
\\ \hphantom{A_{2}(n,j) =}
{}\times(4n+2\kappa_{0}+2\kappa_{1}+j+1) (4n+2\kappa_{0}+2\kappa_{1}+j),
\\
A_{0}(n,j) =-8\omega^{2}\frac{(\kappa_{0}+\kappa
_{1}) (\kappa_{0}-\kappa_{1}) (2n+\kappa
_{0}+\kappa_{1}+j+1)^{2}}{(2n+\kappa_{0}+\kappa_{1})
(2n+\kappa_{0}+\kappa_{1}+1)},
\\
A_{1}(n,j) =-8\omega(\kappa_{0}-\kappa_{1})
\frac{(j+1) (2n+\kappa_{0}+\kappa_{1}+j+1) (4n+2\kappa_{0}+2\kappa_{1}+j+1)}{(2n+\kappa_{0}+\kappa
_{1}-1)_{3}},
\\
A_{-1}(n,j) =-8\omega^{3}(\kappa_{0}+\kappa
_{1}) \frac{(2n+\kappa_{0}+\kappa_{1}+j+1)}{(2n+\kappa_{0}+\kappa_{1})_{3}}.
\end{gather*}
Omit $A_{-2}$ if $j<2$, $A_{2}$ if $n=0$, $A_{1}$ if $n=0$, $A_{-1}$ if $j<1$.
In the special case $\psi_{1,0}=z$, only one term appears: $\mathcal{K}%
z=A_{0}(0,0) z$ and $A_{0}(0,0) =-8\omega
^{2}(\kappa_{0}-\kappa_{1}) (\kappa_{0}+\kappa
_{1}+1) $.

The adjacent ($j\pm1$) polynomials to $\psi_{4n+3+j,j}$ are $\psi_{4(n+1) +1+(j-1),j-1}$ and $\psi_{4n+1+(j+1),j+1}$. By use
of \eqref{cof4n1}, \eqref{cof4n3} and $E_{4n+3+2j}$, $\mathcal{J}^{2}%
\psi_{4n+3+j,j}=(4n+3+2\kappa_{0}+2\kappa_{1})^{2}$,
$R\psi_{4n+3+j,j}=\big({1-4\kappa_{0}^{2}-4\kappa_{1}^{2}}\big)
\psi_{4n+3+j}$, we obtain
\begin{gather*}
\mathcal{K}\psi_{4n+3+j,j} =B_{-2}(n) \psi_{4n+5+j,j-2}%
+B_{0}(n,j) \psi_{4n+3+j,j}+B_{2}(n,j)\psi_{4n+1+j,j+2}
\\ \hphantom{\mathcal{K}\psi_{4n+3+j,j} =}
{} +B_{1}(n,j) \psi_{j+1,4n+2+j}+B_{-1}(n,j)\psi_{j-1,4n+4+j},
\\
B_{-2}(n) =16\omega^{4}\frac{(n+1) (n+\kappa_{0}+\kappa_{1}+1)}{(2n+\kappa_{0}+\kappa_{1}+2)
(2n+\kappa_{0}+\kappa_{1}+3)},
\\
B_{2}(n,j) =4\frac{(j+1) (j+2)
(2n+2\kappa_{0}+1) (2n+2\kappa_{1}+1)}{(2n+\kappa_{0}+\kappa_{1}+1) (2n+\kappa_{0}+\kappa_{1})}
\\ \hphantom{B_{2}(n,j) =}
{} \times(4n+2\kappa_{0}+2\kappa_{1}+j+2) (4n+2\kappa_{0}+2\kappa_{1}+j+3),
\\
B_{0}(n,j) =-8\omega^{2}\frac{(\kappa_{0}+\kappa
_{1}) (\kappa_{0}-\kappa_{1}) (2n+\kappa
_{0}+\kappa_{1}+j+2)^{2}}{(2n+\kappa_{0}+\kappa_{1}+1)
(2n+\kappa_{0}+\kappa_{1}+2)},
\\
B_{1}(n,j) =-8\omega(\kappa_{0}+\kappa_{1})
\frac{(2n+\kappa_{0}+\kappa_{1}+j+2) (4n+2\kappa
_{0}+2\kappa_{1}+j+3)}{(2n+\kappa_{0}+\kappa_{1})_{3}}
\\ \hphantom{B_{1}(n,j) =}
 \times(j+1) (2n+2\kappa_{0}+1) (2n+2\kappa_{1}+1) ,
 \\
B_{-1}(n,j) =-32\omega^{3}(\kappa_{0}-\kappa
_{1}) \frac{(n+1) (2n+\kappa_{0}+\kappa
_{1}+j+1) (n+\kappa_{0}+\kappa_{1}+1)}{(2n+\kappa_{0}+\kappa_{1}+1)_{3}}.
\end{gather*}
Omit $B_{-2}$ if $j<2$, $B_{2}$ if $n=0,B_{-1}$ if $j=0$.

It is perhaps a surprise that the coefficients $A_{0}(n,j) $ and
$B_{0}(n,j) $ are products of linear factors, in contrast to the
even case where the neatest expressions for $A_{0}^{0}$, $A_{0}^{1}$, $B_{0}^{0}$, $B_{0}^{1}$ are partial fractions. In fact, all of the coefficients in this
subsection are products of linear factors, which is not the case for some of
the terms in $H_{0}\psi_{2n+1-j,j}$.

\section{Conclusion}

We described an orthogonal basis of wavefunctions in terms of Jacobi and
Laguerre polynomials. Each of the basis elements is of a particular isotype,
that is, involved in one of the five irreducible representations of the group
$B_{2}$. We defined a fourth order differential-difference self-adjoint
operator $\mathcal{K}$ which commutes with $\mathcal{H}$ but not with the
angular momentum $\mathcal{J}^{2}$.\ This is an example of superintegrability.
The action of $\mathcal{K}$ on the basis elements was found explicitly. It is
known \cite{Quesne2010} that there are differential operators of degree $2k$
which demonstrate superintegrability for the two-parameter $I_{2}(2k)$ (even dihedral group) model. It does not appear straightforward
to adapt the methods of this paper to the larger groups.

\appendix

\section{Transformation of the Hamiltonian}\label{hHh}

This is a short proof of the formula
\[
h_{\kappa}\bigl(-\Delta_{\kappa}+\omega^{2}\Vert x\Vert
^{2}\bigr) h_{\kappa}^{-1}=-\Delta+\omega^{2}\Vert x\Vert
^{2}+\sum_{v\in R_{+}}\frac{\kappa_{v}(\kappa_{v}-\sigma_{v})
\Vert v\Vert^{2}}{\langle x,v\rangle^{2}},
\]
where $h_{\kappa}(x) :=\prod_{v\in R_{+}}\vert
\langle x,v\rangle \vert^{\kappa_{v}}$, the $W(R) $-invariant weight function used in $L^{2}\big(\mathbb{R}^{N},h_{\kappa}^{2}\mathrm{d}m\big) $. For the Laplacian, we have
\begin{align*}
h_{\kappa}\Delta\big(fh_{\kappa}^{-1}\big) -\Delta f&=fh_{\kappa}\Delta
h_{\kappa}^{-1}+2h_{\kappa}\big\langle \nabla f,\nabla h_{\kappa}^{-1}\big\rangle
\\
&=f\sum_{v\in R_{+}}\kappa_{v}\frac{\Vert v\Vert^{2}}{\langle
x,v\rangle^{2}}+f\sum_{i=1}^{N}\bigg(\sum_{v\in R_{+}}\frac
{-\kappa_{v}v_{i}}{\langle x,v\rangle}\bigg)^{2}-2\sum_{v\in
R_{+}}\kappa_{v}\frac{\langle \nabla f,v\rangle}{\langle x,v\rangle}
\end{align*}
and%
\[
\sum_{i=1}^{N}\bigg(\sum_{v\in R_{+}}\frac{-\kappa_{v}v_{i}}{\langle x,v\rangle}\bigg)^{2}
=\sum_{u,v\in R_{+}}\kappa_{u}\kappa_{v}\frac{\langle u,v\rangle}{\langle x,u\rangle
\langle x,v\rangle}=\sum_{v\in R_{+}}\kappa_{v}^{2}\frac
{\Vert v\Vert^{2}}{\langle x,v\rangle^{2}};
\]
this follows from breaking up the double sum over rotations $w=\sigma
_{u}\sigma_{v}$ and the identity $\sigma_{u}^{2}$ and applying a lemma
\cite[Lemma~6.4.6]{DunklXu2014} the $w$ terms vanish. Thus
\[
h_{\kappa}\Delta\big(fh_{\kappa}^{-1}\big) -\Delta f=f\sum_{v\in R_{+}%
}\kappa_{v}(\kappa_{v}+1) \frac{\Vert v\Vert^{2}%
}{\langle x,v\rangle^{2}}-2\sum_{v\in R_{+}}\kappa_{v}%
\frac{\langle \nabla f,v\rangle}{\langle x,v\rangle}.
\]
Also%
\begin{align*}
 \sum_{v\in R_{+}}2\kappa_{v}\frac{h_{\kappa}\big\langle \nabla\big(fh_{\kappa}^{-1}\big),v\big\rangle -\langle \nabla f,v\rangle}{\langle x,v\rangle}
 =-2f\sum_{u,v\in R_{+}}\kappa_{u}\kappa_{v}\frac{\langle
u,v\rangle}{\langle x,u\rangle \langle x,v\rangle
}=-2f\sum_{v\in R_{+}}\kappa_{v}^{2}\frac{\Vert v\Vert^{2}%
}{\langle x,v\rangle^{2}}.
\end{align*}
The other part of $h_{\kappa}\Delta_{\kappa}\big(fh_{\kappa}^{-1}\big) $
contributes $-\sum_{v\in R_{+}}\kappa_{v}\frac{f-\sigma_{v}%
f}{\langle x,v\rangle^{2}}$ thus%
\begin{align*}
h_{\kappa}\Delta_{\kappa}\big(fh_{\kappa}^{-1}\big) -\Delta f &
=\sum_{v\in R_{+}}\kappa_{v}\frac{\Vert v\Vert^{2}}{\langle
x,v\rangle^{2}}\{ (\kappa_{v}+1) f-2\kappa
_{v}f-f+\sigma_{v}f\}
\\
& =-\sum_{v\in R_{+}}\kappa_{v}\frac{\Vert v\Vert^{2}%
}{\langle x,v\rangle^{2}}(\kappa_{v}f-\sigma_{v}f).
\end{align*}
This proves the formula.

\section{Symbolic computation proofs}\label{Symb}

There is an analog $K(x,y) $ of the exponential function
$\exp\langle x,y\rangle $ on $\mathbb{R}^{N}\times\mathbb{R}^{N}$
which satisfies $K(x,y) =K(y,x)$, $K(xw,yw) =K(x,y) $ for all $w\in W(R) $ and
$\mathcal{D}_{i}^{(x)}K(x,y) =y_{i}K(x,y) $ (where~$\mathcal{D}_{i}^{(x)}$ is the operator
$\mathcal{D}_{i}$ acting on $x$, for $1\leq i\leq N$). The kernel exists for
nonsingular parameters $\{ \kappa_{v}\} $, which include the
situation $\kappa_{v}\geq0$. Suppose $p(x) $ is a polynomial
then by the product rule%
\begin{align*}
\mathcal{D}_{i}(p(x) K(x,y) )
=\bigg(y_{i}p(x) +\frac{\partial}{\partial x_{i}}p(x) \bigg) K(x,y)
 +\sum_{v\in R_{+}}\kappa_{v}\frac{p(x) -p(x\sigma
_{v})}{\langle x,v\rangle}K(x\sigma_{v},y)
v_{i}.
\end{align*}
This formula together with $wK(x,y) =K(xw,y)
=K\big(x,yw^{-1}\big) $ show how an element of the rational Cherednik
algebra (an algebra of operators on polynomials generated by $\big\{
\mathcal{D}_{i}^{(x)},x_{i}\colon 1\leq i\leq N\big\} \cup
W(R) $) acts on a generic sum $\sum_{w\in W(R)
}p_{W}(x,y) K(xw,y) $. It can be shown that if~$\mathcal{T}$ is in the rational Cherednik algebra and $\mathcal{T}K(x,y) =0$, then $\mathcal{T}=0$ (see Dunkl~\cite{Dunkl1999}). For
particular groups and operators, the calculation of $\mathcal{T}K(x,y) $ can be implemented in computer algebra. The function $K$ is an
undefined function with argument $\langle x,y\rangle $ \big(or
$\big\langle x,yw^{-1}\big\rangle$\big). To~compute $\mathcal{D}_{i}^{(x)}K( xw,y) =\mathcal{D}_{i}^{(x)}K\big(x,yw^{-1}\big) =\big(yw^{-1}\big)_{i}K( xw,y) $ one
applies $\frac{\partial}{\partial x_{i}}$ to $\big\langle x,yw^{-1}%
\big\rangle$, a~straightforward calculation.

In the $B_{2}$ application with complex coordinates $z=x_{1}+\mathrm{i}%
x_{2}$, $u=y_{1}+\mathrm{i}y_{2}$, the inner product is $\langle
x,y\rangle =\frac{1}{2}(z\overline{u}+\overline{z}u) $. As
examples,
\begin{gather*}
\overline{T}K\bigg(\frac{1}{2}(z\overline{u}+\overline{z}u)\bigg) =\frac{1}{2}uK\bigg(\frac{1}{2}(z\overline{u}+\overline{z}u) \bigg),
\\
T\bigg\{ \big(z^{2}-\overline{z}^{2}\big) K\bigg(\frac{1}{2}(z\overline{u}+\overline{z}u)\bigg)\bigg\}
=\bigg\{\frac{1}{2}\big(z^{2}-\overline{z}^{2}\big) \overline{u}+2z\bigg\} K\bigg(\frac{1}{2}( z\overline{u}+\overline{z}u) \bigg)
\\ \hphantom{T\bigg\{ \big(z^{2}-\overline{z}^{2}\big) K\bigg(\frac{1}{2}(z\overline{u}+\overline{z}u)\bigg)\bigg\}=}
{}+2\kappa_{0}(z-\overline{z}) K\bigg({-}\frac{1}{2}(zu+\overline{z}\overline{u}) \bigg)
\\ \hphantom{T\bigg\{ \big(z^{2}-\overline{z}^{2}\big) K\bigg(\frac{1}{2}(z\overline{u}+\overline{z}u)\bigg)\bigg\}=}
{}+2\kappa_{0}(z+\overline{z}) K\bigg(\frac{1}{2}(zu+\overline{z}\overline{u}) \bigg),
\end{gather*}
and%
\begin{align*}
K\left(-\frac{1}{2}(zu+\overline{z}\overline{u}) \right)
=\sigma_{2}K\left(\frac{1}{2}(z\overline{u}+\overline{z}u)\right),\qquad
K\left(\frac{1}{2}(zu+\overline{z}\overline{u}) \right)
=\sigma_{0}K\left(\frac{1}{2}(z\overline{u}+\overline{z}u)\right).
\end{align*}
This method is used to prove Theorem~\ref{H0123sq}.

\subsection*{Acknowledgements}

The author presented some of the material in a plenary lecture at the 34th
International Colloquium on Group Theoretical Methods in Physics, Strasbourg,
France, July 18--22, 2022. Also the author thanks the referees for their careful
reading and constructive suggestions.

\pdfbookmark[1]{References}{ref}
\LastPageEnding


\begin{thebibliography}{99}
\footnotesize\itemsep=0pt

\bibitem{BakerForrester1997}
Baker T.H., Forrester P.J., The {C}alogero--{S}utherland model and generalized
 classical polynomials, \href{https://doi.org/10.1007/s002200050161}{\textit{Comm. Math. Phys.}} \textbf{188} (1997),
 175--216, \href{https://arxiv.org/abs/solv-int/9608004}{arXiv:solv-int/9608004}.

\bibitem{Dunkl1989}
Dunkl C.F., Differential-difference operators associated to reflection groups,
 \href{https://doi.org/10.2307/2001022}{\textit{Trans. Amer. Math. Soc.}} \textbf{311} (1989), 167--183.

\bibitem{Dunkl1999}
Dunkl C.F., Computing with differential-difference operators,
 \href{https://doi.org/10.1006/jsco.1997.0341}{\textit{J.~Symbolic Comput.}} \textbf{28} (1999), 819--826.

\bibitem{DunklXu2014}
Dunkl C.F., Xu Y., Orthogonal polynomials of several variables, 2nd ed.,
 \textit{Encyclopedia Math. Appl.}, Vol.~155, \href{https://doi.org/10.1017/CBO9781107786134}{Cambridge University
 Press}, Cambridge, 2014.

\bibitem{Feigin2012}
Feigin M., Generalized {C}alogero--{M}oser systems from rational {C}herednik
 algebras, \href{https://doi.org/10.1007/s00029-011-0074-y}{\textit{Selecta Math. (N.S.)}} \textbf{18} (2012), 253--281,
 \href{https://arxiv.org/abs/0809.3487}{arXiv:0809.3487}.

\bibitem{FeiginHakobyan2015}
Feigin M., Hakobyan T., On {D}unkl angular momenta algebra, \href{https://doi.org/10.1007/JHEP11(2015)107}{\textit{J.~High
 Energy Phys.}} \textbf{2015} (2015), no.~11, 107, 23~pages,
 \href{https://arxiv.org/abs/1409.2480}{arXiv:1409.2480}.

\bibitem{LapointeVinet1996}
Lapointe L., Vinet L., Exact operator solution of the {C}alogero--{S}utherland
 model, \href{https://doi.org/10.1007/BF02099456}{\textit{Comm. Math. Phys.}} \textbf{178} (1996), 425--452,
 \href{https://arxiv.org/abs/q-alg/9509003}{arXiv:q-alg/9509003}.

\bibitem{Lassalle1991}
Lassalle M., Polyn\^omes de {H}ermite g\'en\'eralis\'es, \textit{C.~R.~Acad.
 Sci. Paris S\'er.~I Math.} \textbf{313} (1991), 579--582.

\bibitem{Quesne2010}
Quesne C., Exchange operator formalism for an infinite family of solvable and
 integrable quantum systems on a plane, \href{https://doi.org/10.1142/S0217732310032202}{\textit{Modern Phys. Lett.~A}}
 \textbf{25} (2010), 15--24, \href{https://arxiv.org/abs/0910.2151}{arXiv:0910.2151}.

\bibitem{TremblayETal2009}
Tremblay F., Turbiner A.V., Winternitz P., An infinite family of solvable and
 integrable quantum systems on a plane, \href{https://doi.org/10.1088/1751-8113/42/24/242001}{\textit{J.~Phys.~A}} \textbf{42}
 (2009), 242001, 10~pages, \href{https://arxiv.org/abs/0904.0738}{arXiv:0904.0738}.

\bibitem{TremblayETal2010}
Tremblay F., Turbiner A.V., Winternitz P., Periodic orbits for an infinite
 family of classical superintegrable systems, \href{https://doi.org/10.1088/1751-8113/43/1/015202}{\textit{J.~Phys.~A}} \textbf{43}
 (2010), 015202, 14~pages, \href{https://arxiv.org/abs/0910.0299}{arXiv:0910.0299}.

\end{thebibliography}
\end{document}